\documentclass[11pt]{article}
\usepackage[letterpaper,margin=1in]{geometry}

\usepackage[T1]{fontenc}
\usepackage[utf8]{inputenc}
\usepackage{lmodern}
\usepackage{microtype}
\usepackage{amsmath,amssymb,mathtools,amsthm}
\usepackage{enumitem}
\usepackage{xcolor}
\usepackage{float}
\usepackage{tikz}

\newtheorem{theorem}{Theorem}[section]
\newtheorem{lemma}[theorem]{Lemma}
\newtheorem{proposition}[theorem]{Proposition}
\newtheorem{corollary}[theorem]{Corollary}
\theoremstyle{definition}
\newtheorem{definition}[theorem]{Definition}
\newtheorem{remark}[theorem]{Remark}

\newcommand{\F}{\mathbb F}
\newcommand{\ket}[1]{\lvert #1\rangle}
\newcommand{\bra}[1]{\langle #1\rvert}
\newcommand{\norm}[1]{\left\lVert #1\right\rVert}
\newcommand{\abs}[1]{\left\lvert #1\right\rvert}
\newcommand{\Tr}{\operatorname{Tr}}
\newcommand{\supp}{\operatorname{supp}}
\newcommand{\Id}{\mathbb I}
\newcommand{\Hcal}{\mathcal H}

\allowdisplaybreaks[2]
\setlist{leftmargin=*,topsep=.35em,itemsep=.12em,parsep=0pt,partopsep=0pt}
\usepackage{quantikz}
\usepackage{hyperref}

\hypersetup{
  colorlinks=true,
  linkcolor=blue!55!black,
  citecolor=blue!55!black,
  urlcolor=blue!55!black,
  pdftitle={Random Garbage Separates XOR from Forward-Only Queries},
  pdfauthor={Anonymous Authors}
}

\begin{document}

\title{Random Garbage Separates XOR from Forward-Only Queries}
\author{
  Khaled Elbassioni\textsuperscript{1}\quad
  Rishikesh Gajjala\textsuperscript{3}\quad
  Saurabh Ray\textsuperscript{2,3}\\[0.5em]
  \small\textsuperscript{1} Khalifa University of Science and Technology,
  Abu Dhabi, United Arab Emirates\\
  \small\textsuperscript{2} Division of Science, New York University Abu Dhabi, United Arab Emirates\\
  \small\textsuperscript{3} Center for Quantum and Topological Systems, NYUAD Research Institute, United Arab Emirates
}
\date{}

\maketitle

\begin{abstract}
We give exponential quantum query separations between the standard XOR
interface and two forward-only interfaces that supply neither an adjoint nor
an inverse oracle.  Let $X=\F_2^n$, $N=|X|$, and
$f_{h,r}(x)=(h(x),x,r_x)$, where $h:X\to X$ is promised to be either a
permutation or a Simon two-to-one function, and $r$ is a fixed table of
$n$-bit tags, unrestricted by the promise and reused on every query.  The
resulting problem is solvable with at most $n+2$ standard XOR queries, but
has forward-erasing query complexity $\Theta(\sqrt N)$. This answers affirmatively open question~11 in~\cite{Aaronson2021Open}. We also embed these instances into permutations.  The detailed construction retains the copy of
$x$ in each prescribed output, but for these promises that copy can be
replaced by one bit that distinguishes the two inputs in every Simon pair.
This gives a permutation domain of size $L=4N^2$ and a permutation problem
with the same standard-query upper bound and forward-only in-place query
complexity $\Theta(\sqrt N)=\Theta(L^{1/4})$.  Both lower bounds remain valid
with a clean coherent bypass.

The common lower bound uses an analysis-only recording replacement.  In the
replacement computation, tracing out the fixed random tag table after $T$
calls gives a sum of positive-semidefinite operator contributions, each
depending on $h$ at no more than $T$ addresses.  On such a set, the
restrictions induced by random permutations and random Simon functions differ
only if the set contains a hidden Simon pair, an event of probability
$O(T^2/N)$.
\end{abstract}

\section{Introduction}

The standard quantum query interface to a classical function preserves the
query input and XORs the function value into a target register.  An injective
function also admits a smaller forward interface that replaces the input by
the function value.  This interface is an isometry when the codomain is
larger and an in-place unitary when the function is a permutation.  We show
that these interfaces can have exponentially different query complexity
when only forward calls are supplied.

Aaronson asked whether a Boolean property of injective functions can require
asymptotically fewer standard queries than forward-erasing queries, and
suggested a Simon function padded by fixed random garbage as a candidate
\cite{Aaronson2021Open}.  Here the garbage is a table of tags sampled once
for the oracle instance and reused on every call.  A quantum algorithm may
therefore revisit the same entries in superposition and coherently combine
amplitudes from different visits.  Refreshing the tags on each call would
define a different oracle model.

The tags are essential to the lower bound, rather than merely a device for
making the oracle injective: the explicit copy of $x$ already guarantees
injectivity.  If the tag table were constant, a forward query would still
preserve the interference needed for ordinary Simon sampling.  Under the
random hard distribution, the independently sampled fixed tags instead give
the proof one purifying table cell for each address.  The recording
comparison organizes the state into orthogonal components indexed by the
addresses whose cells have left their initial uniform subspaces.  This
structure ultimately converts $T$ forward queries into a birthday bound.

We establish the separation first for an injection into a larger codomain,
where the recording argument is most transparent, and then for a
permutation.  The formal query interfaces appear in
Section~\ref{sec:preliminaries}.  They do not supply the adjoint or inverse
oracle: if the inverse were also available, two forward/inverse calls would
implement one standard XOR query.

\subsection{Results}

Fix $X=\F_2^n$ and $N=|X|=2^n$.  The injective problem appends the
queried address and a fixed tag to a Simon-type function.  The permutation
problem embeds these outputs at a set of special inputs of a larger
permutation and permits every consistent completion.  The two promise
problems are defined in Sections~\ref{sec:problem} and
\ref{sec:permutation-extension}.

The resulting separations are summarized by the following theorem.
Here $Q_{\mathrm{XOR}}$, $Q_{\mathrm{erase}}$, and $Q_{\mathrm{IP}}$ denote
bounded-error query complexity in the standard XOR, forward-erasing, and
forward in-place models, respectively.

\begin{theorem}[Combined separation]
\label{thm:combined-separation}
For every \(n\geq1\),
\[
 \begin{aligned}
   Q_{\mathrm{XOR}}(\mathrm{GarbageSimon}_n)&\le n+2,
   &Q_{\mathrm{erase}}(\mathrm{GarbageSimon}_n)&=\Theta(2^{n/2}),\\
   Q_{\mathrm{XOR}}(\mathrm{PermutationGarbageSimon}_n)&\le n+2,
   &Q_{\mathrm{IP}}(\mathrm{PermutationGarbageSimon}_n)&=\Theta(2^{n/2}).
 \end{aligned}
\]
Moreover, replacing the returned copy of $x$ by the one-bit separator in
Remark~\ref{rem:one-bit-compression} gives a permutation problem on
$L'=2^{2n+2}=4N^2$ points with standard XOR query complexity at most $n+2$
and forward in-place query complexity
$\Theta(2^{n/2})=\Theta((L')^{1/4})$.
These lower bounds remain valid with a clean coherent bypass control.
\end{theorem}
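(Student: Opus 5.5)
The theorem collects several bounds, and I will prove them separately and then combine them.

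\textbf{Standard XOR upper bound.} With XOR access to $f_{h,r}$, I run Simon's algorithm on the first output component $h$. Each query prepares $\sum_x\ket{x}\ket{0}$ and XORs in $(h(x),x,r_x)$. The target registers holding $x$ and $r_x$ are functions of the input register. They therefore act exactly like extra garbage copies of $h(x)$ and do not disturb the Fourier-sampling structure: the $x$-copy is $x$ itself, and measuring the $x$ register in the Hadamard basis after it has been entangled with its copy would destroy interference, so I would uncompute. Concretely, I handle the copy of $x$ and the tag $r_x$ by querying a second time to clean them. Alternatively, and more simply, I note that $r_x$ and $x$ are functions of $x$ jointly with $h(x)$. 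The correct trick is to XOR only into the $h$-part and leave the other target parts in state $\ket{+}^{\otimes}$ so they absorb nothing. Since XOR into $\ket{+}$ registers is trivial, one standard query then realizes a pure $h$-query. Simon's sampling $n-1$ times plus a constant number of verification queries (check $h(0)=h(s)$ for the candidate $s$) gives $\le n+2$. The same argument applies verbatim to the permutation problem and to the one-bit-compressed version, since those are also XOR-accessible and contain $h$ at special inputs.

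\textbf{Upper bounds $O(2^{n/2})$ in the forward models.} A collision-finding (BHT/birthday) algorithm on $h$ works. With forward-erasing access I can still evaluate $h$ classically on chosen inputs, or run Grover over a table. The classical birthday approach uses $O(2^{n/2})$ queries and finds a Simon pair with constant probability. In the in-place permutation model, classical queries to special inputs suffice as well.

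\textbf{Lower bounds.} This is the main work, and I would rely on the recording lemma the abstract describes. Against the hard distribution (random permutation $h$ versus random Simon function, uniform fixed tags), I replace each forward query by a recording query on a purified tag table. After $T$ calls, tracing out the table writes the algorithm's final state as $\sum_S \rho_S$. Each $\rho_S\succeq0$ depends on $h$ only through $h|_S$ with $|S|\le T$, and the weights of the $\rho_S$ sum to one. The restrictions $h|_S$ have identical distributions under the two cases unless $S$ contains a hidden pair, which happens with probability at most $\binom{T}{2}/(N-1)=O(T^2/N)$. Averaging over the choice of $h$ then bounds the trace distance between the two final states by $O(T^2/N)$, which forces $T=\Omega(\sqrt N)$.

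For the permutation version I reduce to the erasing case. Each forward in-place query is simulated on the special inputs using the injective oracle, and on the completion part using a random consistent completion sampled by the simulator, which is permitted since every consistent completion is allowed. A clean coherent bypass adds a controlled identity, and the recording decomposition survives because the bypass branch touches no cell. The one-bit compression keeps injectivity on Simon pairs and keeps the tags, so the same argument applies with $L'=4N^2$ and $\sqrt N=(L')^{1/4}$.

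The main obstacle is the recording decomposition. Its positivity and $h|_S$-locality for forward-only, non-XOR queries must be established rigorously, in particular the claim that the uniform tags purify per address so that unrecorded addresses carry no $h$-information.
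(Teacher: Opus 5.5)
\textbf{The permutation lower bound has a genuine gap.} Your reduction answers background queries with ``a random consistent completion sampled by the simulator,'' but no such sampling is possible. Every promised $P$ must map $\Omega\setminus\mathcal A$ bijectively onto $\Omega\setminus\mathcal F$, and $\mathcal F=\{(1,h(x),x,r_x)\}$ is exactly the hidden data. Testing membership in $\mathcal F$ with extra forward-erasing calls would leave $(h(i),i,r_i)$ in registers that cannot be uncomputed, so the simulated query would not be $U_P$. Background queries genuinely carry information about $\mathcal F$, and controlling them is most of the paper's work:
\begin{enumerate}
\item The reverse coupling (Lemma~\ref{lem:reverse-coupling}) writes the uniform completion as $P=\pi\circ\tau_W$, with $\pi$ uniform and independent of $(h,r)$.
\item Each in-place call is represented by two signed XOR calls, and a sparse-disagreement hybrid replaces $P$ by the ideal function $g$ at cost $O(T\sqrt{N/L})$.
\item Even then, the ideal macro on a background $w$ with $\pi(w)=(1,h(i),i,t)$ performs a coherent equality test $r_i=t$. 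This needs a new recording contraction $\widetilde M$: baseline action on $\operatorname{im}\Pi_{0,i}$, and $\Pi_{1,i}M_w\Pi_{1,i}$ on the recorded part.
\item The contraction property of $\widetilde M$ does not follow branch by branch, because the special branch $a_i$ and the candidate branches $w_t$ share outputs and interfere. The paper controls this with the Gram identities and the completed square $-\frac1N(\|\gamma-c\|^2+\|c\|^2)$.
\end{enumerate}
Without these ingredients you prove neither $Q_{\mathrm{IP}}=\Theta(2^{n/2})$ nor the compressed $(L')^{1/4}$ statement.

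\textbf{Your injective recording step is overstated.} There is no exact decomposition $\sum_S\rho_S$ with weights summing to one in which unrecorded addresses carry no $h$-information. After a forward-erasing query to $x$, the component in which $R_x$ is still in $\ket u$ has amplitude $N^{-1/2}$ and already holds $h(x)$ in $H$. The paper discards this component with $\widetilde V_h=\Pi_{\mathrm{rec}}V_h$, a contraction with $\|V_h-\widetilde V_h\|=N^{-1/2}$. It then pays $O(T/\sqrt N)$ in a telescoping hybrid and absorbs the resulting norm deficit in the birthday comparison (the step $S_C\le S_A+|a-b|$). The resulting bound is $T(T-1)/(N-1)+6T/\sqrt N$, not $O(T^2/N)$; it still gives $T=\Omega(\sqrt N)$, but only through this argument.

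\textbf{Two upper-bound details.} First, $n-1$ Simon samples span $s^\perp$ with probability only about $0.29$, so ``$n-1$ samples plus verification'' does not succeed with probability $2/3$. Instead take $n+2$ samples and answer ``permutation'' iff they span $X$; this errs only in the permutation case, with probability at most $(2^n-1)2^{-(n+2)}<1/4$. Second, Grover/BHT is unavailable in both forward-only models, because its predicate must be uncomputed. Only the classical birthday sampler gives the $O(\sqrt N)$ forward upper bounds.
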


The forward lower bounds do not conflict with the
$O(N^{1/3})$-query Brassard--H{\o}yer--Tapp (BHT) collision-finding algorithm
\cite{Brassard1998Collision}.  With standard XOR access to the visible
$h$-coordinate, BHT
tabulates $O(N^{1/3})$ values and Grover-searches for another input whose
value occurs in the table.  In the usual coherent implementation, the Grover
predicate computes the oracle
output, marks a match, and uncomputes the output before the diffusion step.
A standard XOR oracle permits this uncomputation by applying the same
self-inverse query again.  The forward-erasing interface does not supply the
adjoint isometry, and the forward in-place interface does not supply the
inverse permutation unitary.  Thus this implementation of BHT is not an
algorithm in either forward-only model.  The lower bounds below establish
the stronger conclusion that no alternative forward-only strategy attains
the BHT query scale on these promise problems.

For the detailed permutation family, let $L=2^{3n+1}$ denote the domain
size.  Its forward bound $\Theta(2^{n/2})$ is $\Theta(L^{1/6})$, since
$2^{n/2}=2^{-1/6}L^{1/6}$.  We retain the explicit input copy in this
construction because it makes the candidate index visible in every
prescribed output.  Remark~\ref{rem:one-bit-compression} replaces that copy
by one bit and obtains a domain of size $L'=4N^2$, improving the dependence
on the permutation-domain size to $\Theta((L')^{1/4})$.  The standard-query
algorithm reduces each sample to an ordinary Simon sample; the matching
forward upper bound searches for a collision among sampled $h$-values.  We
formulate the primary models without a control register.  A clean coherent
bypass control selects in superposition between applying and skipping the
oracle, and Remark~\ref{rem:coherent-bypass} shows that allowing this
additional control does not affect the lower bounds.

\subsection{Proof strategy}

For the lower bounds, we sample one uniformly random tag table and use that
same table on every oracle call.  A proof-only purification stores its cells
in inaccessible registers.  We replace each genuine query, only for the
analysis, by a nearby contraction with a useful recording property.  After
$T$ calls, the recording computation is an orthogonal sum of sectors indexed
by sets $D\subseteq X$ with $|D|\leq T$, and the contribution in sector $D$
depends on the hidden function $h$ only through $h|_D$.  Once the table
registers are traced out, the sector contributions are positive operators.
The hard distributions on random permutations and random Simon functions
have identical local behavior unless $D$ contains a hidden Simon pair; the
probability of this event is governed by the birthday bound.
Section~\ref{sec:recording-theorem} isolates these steps as the
recording-to-birthday theorem, and
Section~\ref{sec:injective-lower-bound} applies it to the forward-erasing
query.

The permutation extension requires additional work because an in-place
algorithm may also query outside the special inputs on which the
embedded instance is specified.  A reverse coupling rewrites the random
completion using a uniform reference permutation independent of the hidden
instance and a sparse relocation.  A signed-oracle hybrid controls the sparse
change.  The resulting comparison for one original query slot is then
replaced by a permutation-specific recording contraction.  Its locality and
approximation properties verify the hypotheses of the same
recording-to-birthday theorem for arbitrary superpositions of special
and background inputs.

Methodologically, the query-by-query replacements use a hybrid or
telescoping argument in the tradition of Bennett, Bernstein, Brassard, and
Vazirani~\cite{Bennett1997}.  The purification and cell-recording viewpoint
is related to Zhandry's query-recording and compressed-oracle method and its
subsequent development in query lower bounds
\cite{Zhandry2019,Chung2021Compressed,Hamoudi2023Collision}.  Here the
recorded random object is a fixed tag table belonging to a structured
oracle, and the replacement query is a nearby contraction rather than an
exact compressed simulation.

Figure~\ref{fig:proof-roadmap} summarizes the dependencies between the common
recording theorem and its two applications.

\begin{figure}[H]
\centering
\begin{tikzpicture}[
  roadbox/.style={draw,rounded corners=2pt,minimum height=.9cm,
                  text width=5.1cm,align=center,inner sep=4pt,font=\small},
  appbox/.style={roadbox,minimum height=1.65cm},
  shared/.style={roadbox,fill=blue!8},
  result/.style={roadbox,fill=green!8},
  flow/.style={->,thick,shorten >=1.5pt},
  branchline/.style={thick,rounded corners=2pt}
]
  \node[shared,text width=5.6cm] (record) at (0,3.2)
    {Recording-to-birthday theorem\\
     Theorem~\ref{thm:recording-to-birthday}};
  \node[appbox] (injapp) at (-3.3,1.35)
    {Apply with the injective recording contraction\\
     Section~\ref{sec:injective-lower-bound}};
  \node[appbox] (permapp) at (3.3,1.35)
    {Reverse coupling\\
     Sparse hybrid\\
     Permutation recording\\
     Section~\ref{sec:permutation-extension}};
  \node[result] (inj) at (-3.3,-.25)
    {Forward-erasing lower bound\\
     Corollary~\ref{cor:injective-from-recording}};
  \node[result] (perm) at (3.3,-.25)
    {Forward in-place lower bound\\
     Theorem~\ref{thm:permutation-lower-bound-estimate}};
  \node[result,text width=5.6cm] (combined) at (0,-2.3)
    {Combined separation\\
     Theorem~\ref{thm:combined-separation}};

  \coordinate (fork) at (0,2.55);
  \coordinate (leftfork) at (-3.3,2.55);
  \coordinate (rightfork) at (3.3,2.55);
  \draw[branchline] (record.south) -- (fork);
  \draw[branchline] (leftfork) -- (rightfork);
  \draw[flow] (leftfork) -- (injapp.north);
  \draw[flow] (rightfork) -- (permapp.north);
  \draw[flow] (injapp) -- (inj);
  \draw[flow] (permapp) -- (perm);
  \coordinate (merge) at (0,-1.3);
  \draw[branchline] (inj.south) |- (merge);
  \draw[branchline] (perm.south) |- (merge);
  \draw[flow,shorten <=0pt] (merge) -- (combined.north);
\end{tikzpicture}
\caption{Proof dependencies for the lower bounds.  The upper bounds are
proved separately.}
\label{fig:proof-roadmap}
\end{figure}

\section{Preliminaries}
\label{sec:preliminaries}

This section states the notation, model conventions, and standard facts used
in the proof.

\subsection{Notation and queries}
\label{subsec:notation-and-queries}

Fix \(n\geq1\), let \(X=\F_2^n\) and \(N=\abs X=2^n\), and write the
group operation in \(X\) as bitwise XOR \(\oplus\).  For
\(x,y\in X\), let \(x\cdot y\) denote their inner product over \(\F_2\).
For a function $h:X\to X$ and \(D\subseteq X\), the restriction \(h|_D\) is the partial function
\(D\to X\), \(x\mapsto h(x)\).  Saying that an expression depends only on
\(h|_D\) means that changing \(h\) outside \(D\) does not change it.
We write \(\ket a_B\) for the state \(\ket a\) stored in the named quantum
register \(B\); the subscript labels the register and is not part of \(a\).
Juxtaposed kets denote tensor products, and register labels are omitted when
the intended subsystem is clear.  A displayed ket lists its tensor factors
in the order stated immediately before the formula; we normally rely on this
order and add a register subscript only when the subsystem would otherwise be
ambiguous.  If \(a\) is a bit string, \(\ket a\) denotes the corresponding
computational-basis state.

For nonzero \(s\in X\), write
\[
  s^\perp=\{y\in X\mid y\cdot s=0\},
  \qquad
  \ket u=\frac1{\sqrt N}\sum_{z\in X}\ket z.
\]
Thus \(\ket u\) is the uniform superposition in one \(n\)-qubit,
\(X\)-valued register.
For a permutation $\tau$ of a finite set $\Omega$, its support is
$\supp(\tau)=\{w\in\Omega\mid\tau(w)\neq w\}$.

A \emph{tag table} is a function \(r:X\to X\), written
\(r=(r_x)_{x\in X}\), with \(r_x=r(x)\); we write $X^X$ for the set of all
such tables.  For any \(h:X\to X\), define
\[
  f_{h,r}(x)=(h(x),x,r_x)\in X^3.
\]
The three coordinates are, respectively, the promise-bearing value, a copy of
the input, and an \(n\)-bit tag.  The input copy makes \(f_{h,r}\) injective
for every \(h\) and \(r\).

\begin{definition}[Standard XOR oracle]
\label{def:standard-xor}
For \(f:\{0,1\}^k\to\{0,1\}^{\ell}\), the standard oracle is the unitary
\[
  O_f\ket{x,z}=\ket{x,z\oplus f(x)}.
\]
\end{definition}

\begin{definition}[Forward-erasing oracle]
For an injection \(f:\{0,1\}^k\to\{0,1\}^{\ell}\), the forward-erasing oracle
is the isometry
\[
  E_f\ket x=\ket{f(x)}.
\]
Only forward calls to \(E_f\) are supplied: neither \(E_f^\dagger\) nor a
unitary extension of \(E_f\) is available.
\end{definition}

Injectivity makes \(E_f\) an isometry because distinct input basis states
map to distinct, hence orthogonal, output basis states.  The term
\emph{erasing} refers to the interface replacing its input register; a
particular function may, as ours does, include an explicit copy of \(x\) in
its output.

\begin{definition}[Forward in-place oracle]
\label{def:forward-in-place}
For a permutation $P:\Omega\to\Omega$ of a finite set, the forward in-place
oracle is the unitary on $\mathbb C^\Omega$ determined by
\[
   U_P\ket w=\ket{P(w)}.
\]
Only forward calls to $U_P$ are supplied, not calls to
$U_P^\dagger=U_{P^{-1}}$.
\end{definition}

The forward-only convention is essential to the comparison.  If both
$E_f$ and $E_f^\dagger$ were supplied, one standard XOR query could be
implemented with two oracle calls:
\[
 \ket{x,z}
 \xmapsto{E_f}
 \ket{f(x),z}
 \longmapsto
 \ket{f(x),z\oplus f(x)}
 \xmapsto{E_f^\dagger}
 \ket{x,z\oplus f(x)}.
\]

The same construction uses $U_P$ and $U_P^\dagger$ for a permutation.

Between queries, an algorithm may apply arbitrary oracle-independent
quantum channels, introduce ancillas, and perform adaptive measurements.
By purification and deferred measurement, these operations may be treated
as isometries on a sufficiently large workspace~\cite{Watrous2018}.
We include all Stinespring and deferred-measurement registers in
\(\mathsf{Alg}\), together with the query registers and the algorithm's private
workspace.  Thus $\mathsf{Alg}$ denotes the complete system on which the
algorithm may act.

\begin{remark}[Coherent bypass]
\label{rem:coherent-bypass}
Let $\mathcal Q:\mathcal K_{\mathrm{in}}\to\mathcal K_{\mathrm{out}}$ be a
query isometry, and let
$J:\mathcal K_{\mathrm{in}}\to\mathcal K_{\mathrm{out}}$
be a fixed oracle-independent isometry.  A clean coherent bypass supplies
the direct-sum query.  If an analytical comparison replaces $\mathcal Q$ by
a contraction $\widetilde{\mathcal Q}$, use the corresponding direct sum:
\begin{align*}
 \mathcal Q^{\mathrm{by}}
   &=\ket0\!\bra0\otimes J+\ket1\!\bra1\otimes\mathcal Q,\\
 \widetilde{\mathcal Q}^{\mathrm{by}}
   &=\ket0\!\bra0\otimes J
      +\ket1\!\bra1\otimes\widetilde{\mathcal Q},\\
 \|\mathcal Q^{\mathrm{by}}-\widetilde{\mathcal Q}^{\mathrm{by}}\|
   &=\|\mathcal Q-\widetilde{\mathcal Q}\|,
 \qquad
 \|\widetilde{\mathcal Q}^{\mathrm{by}}\|=1.
\end{align*}
Here the last equality follows because the two control blocks are orthogonal,
$\|J\|=1$, and $\|\widetilde{\mathcal Q}\|\leq1$.  The same orthogonality,
together with the fact that $J$ and $\mathcal Q$ are isometries, shows that
$\mathcal Q^{\mathrm{by}}$ is an isometry.

For completeness, consider the recording hypotheses and sector projectors
$\Pi_D$ defined in Section~\ref{sec:recording-theorem}.  Extend $J$ by the
identity on the table register $\mathcal R$.  If
$\widetilde{\mathcal Q}^{D'\leftarrow D}$ is an active sector block, then
the corresponding bypass sector block is
\[
 (\widetilde{\mathcal Q}^{\mathrm{by}})^{D'\leftarrow D}
 =\mathbf 1_{\{D'=D\}}\ket0\!\bra0\otimes J\otimes\Pi_D
  +\ket1\!\bra1\otimes
    \widetilde{\mathcal Q}^{D'\leftarrow D}.
\]
Here $\mathbf 1_{\{D'=D\}}$ equals one when $D'=D$ and zero otherwise.
The bypass branch leaves $D$ unchanged, while the active branch retains its
original sector-transition rule.  The active blocks retain their original
locality property, and the bypass blocks are the same for every hidden
function because $J$ is oracle independent.  Thus the approximation bound
retains the same $\delta$, the direct sum remains a contraction, and the
sector-transition and locality conditions hold unchanged.  The fixed-table
condition also survives: on a basis table
$\ket r$, use
$\mathcal Q^{\mathrm{by},(r)}=\ket0\!\bra0\otimes J+
\ket1\!\bra1\otimes\mathcal Q^{(r)}$, whose two branches both leave the
table label unchanged.  If the active comparison appends an auxiliary
register, adjoin the same fixed zero-output register to the bypass map $J$.
Every lower
bound below therefore holds for the coherent-bypass extension.  We omit the
bypass register from the proofs.
\end{remark}

\subsection{Simon distributions}

\begin{definition}[Simon function]
A function \(h:X\to X\) is a Simon function with nonzero shift \(s\in X\)
if
\[
  h(x)=h(y)
  \quad\Longleftrightarrow\quad
  y\in\{x,x\oplus s\}.
\]
Every value in the image of \(h\) has exactly two preimages, namely one pair
\(\{x,x\oplus s\}\).
\end{definition}

The following standard calculation is the only property of Simon's
algorithm needed in the upper bound~\cite{Simon1997}.  It identifies the
precise distribution of one Fourier sample in each promise class.

\begin{lemma}[Simon sampling]
\label{lem:simon-sampling}
Prepare \(N^{-1/2}\sum_x\ket{x}\ket{h(x)}\), measure the second register,
apply the $n$-qubit Hadamard transform to the first, and measure a vector
\(y\in X\).
If \(h\) is a permutation, then \(y\) is uniform on \(X\).  If \(h\) is a
Simon function with shift \(s\), then \(y\) is uniform on \(s^\perp\).
\end{lemma}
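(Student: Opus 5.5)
We prove Lemma~\ref{lem:simon-sampling} by a direct computation: fix the outcome of the second register, then compute the Fourier amplitudes of the residual state on the first register.

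\textbf{Step 1: measuring the second register.} Measuring the second register in the computational basis returns some value $v$ in the image of $h$. The first register then collapses to the normalized uniform superposition over the preimage set $h^{-1}(v)$. In the permutation case this set is a single point $x_0$, so the residual state is $\ket{x_0}$. In the Simon case it is a pair $\{x_0,x_0\oplus s\}$, so the residual state is
\[
  \tfrac1{\sqrt2}\bigl(\ket{x_0}+\ket{x_0\oplus s}\bigr).
\]

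\textbf{Step 2: applying the Hadamard transform.} We use the identity
\[
  H^{\otimes n}\ket x=N^{-1/2}\sum_y(-1)^{x\cdot y}\ket y .
\]
\begin{itemize}
\item \emph{Permutation case.} The amplitude on $y$ is $\pm N^{-1/2}$. Hence $\Pr[y]=1/N$ for every $y\in X$, independently of $x_0$.
\item \emph{Simon case.} The amplitude on $y$ is
\[
  (2N)^{-1/2}(-1)^{x_0\cdot y}\bigl(1+(-1)^{s\cdot y}\bigr).
\]
This vanishes when $y\cdot s=1$ and has squared modulus $2/N$ when $y\cdot s=0$.
\end{itemize}

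\textbf{Step 3: conclusion.} Since $s\neq0$, the subspace $s^\perp$ has exactly $N/2$ elements. In the Simon case the probabilities $2/N$ on $s^\perp$ therefore sum to one, so $y$ is uniform on $s^\perp$. In both cases the conditional distribution of $y$ does not depend on the measured value $v$. Averaging over $v$ gives the same distribution unconditionally.

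The only point needing care is this independence from $v$ (equivalently from $x_0$), which is what lets us drop the conditioning. It holds because $x_0$ enters the amplitudes only through the global sign $(-1)^{x_0\cdot y}$. No other obstacle arises.
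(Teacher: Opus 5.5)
Your proposal is correct and follows the paper's own proof: you condition on the measured value, reduce to $\ket{x_0}$ or to $2^{-1/2}(\ket{x_0}+\ket{x_0\oplus s})$, and read off the Hadamard amplitudes. Your additional checks, that $|s^\perp|=N/2$ and that the conditional distribution does not depend on $x_0$, only make explicit what the paper's shorter argument leaves implicit.
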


\begin{proof}
For a permutation, measuring the second register selects one basis state
\(\ket x\), whose Hadamard measurement is uniform on \(X\).  For a Simon
function, the measurement selects
\(2^{-1/2}(\ket x+\ket{x\oplus s})\).  Its Hadamard transform is
\[
  \sqrt{\frac2N}
  \sum_{\substack{y\in X\\y\cdot s=0}}
  (-1)^{x\cdot y}\ket y,
\]
which gives the stated uniform distribution.
\end{proof}

\subsection{Trace distance and norms}

For an operator \(A\), write
\(\norm{A}_1=\Tr\sqrt{A^\dagger A}\) for its trace norm and \(\norm{A}\) for
its operator norm, or largest singular value.  If \(A\) is positive
semidefinite, then \(\norm{A}_1=\Tr A\).  We use the unhalved trace norm
throughout; the
conventional trace distance is \(\frac12\norm{\rho_0-\rho_1}_1\).  If
equally likely density operators \(\rho_0,\rho_1\)
are presented, Helstrom's theorem gives optimal discrimination probability
\cite{Watrous2018}
\[
  p_{\rm opt}
  =\frac12+\frac14\norm{\rho_0-\rho_1}_1.
\]
Partial trace, averaging, and quantum channels do not increase the trace
norm of a Hermitian operator.  We also use
\(\norm{\ket a\bra b}_1=\norm{\ket a}\,\norm{\ket b}\).  Consequently, if
\(\norm{\ket\psi}=1\), \(\norm{\ket\phi}\leq1\), and
\(\norm{\ket\psi-\ket\phi}\leq\varepsilon\), then
\begin{equation}
  \norm{
    \ket\psi\bra\psi-\ket\phi\bra\phi
  }_1
  \leq 2\varepsilon.
  \label{eq:pure-to-trace}
\end{equation}
Finally, if an algorithm succeeds with a stated probability on every
promised input, it succeeds with at least that probability after averaging
over any input distribution.  This elementary distributional implication is
the only direction of Yao's principle used below.

\section{Injective problem}

\label{sec:problem}

The injective oracle is \(f_{h,r}(x)=(h(x),x,r_x)\).  Its tag table is fixed
for the entire oracle instance:
every query to the same \(x\) returns the same tag \(r_x\).  The promise
permits every \(r\), including tables with repeated entries; randomness is
introduced later only to define a hard input distribution.

The output registers are ordered as the $h$-value register $H$, the returned
input register $\mathsf X$, and the tag register $Z$.  On its input space the
forward-erasing oracle acts as
\[
   \ket{0^n,x,0^n}_{H,\mathsf X,Z}
      \longmapsto \ket{h(x),x,r_x}_{H,\mathsf X,Z}.
\]
Equivalently,
\[
\begin{quantikz}[row sep={0.75cm,between origins},column sep=0.55cm]
 \lstick{\(H:\ket{0^n}\)} & \gate[wires=3]{E_{f_{h,r}}} & \rstick{\(H:\ket{h(x)}\)} \qw \\
 \lstick{\(\mathsf X:\ket x\)} &                              & \rstick{\(\mathsf X:\ket x\)} \qw \\
 \lstick{\(Z:\ket{0^n}\)} &                                  & \rstick{\(Z:\ket{r_x}\)} \qw
\end{quantikz}
\]
The displayed three-register realization is defined on the subspace in
which $H$ and $Z$ begin in $\ket{0^n}$.

The promise problem \(\mathrm{GarbageSimon}_n\) has two classes.  In the
NO class, \(h\) is a permutation of \(X\).  In the YES class, \(h\) is a
Simon function with some nonzero shift.  The table \(r\) is unrestricted in
both classes.  Let \(Q_{\mathrm{XOR}}\) and \(Q_{\mathrm{erase}}\) denote bounded-error
query complexity with the standard and forward-erasing interfaces,
respectively; success probability is required to be at least \(2/3\) on
every promised input.

After the two upper bounds, Section~\ref{sec:recording-theorem} proves the
recording-to-birthday theorem, and Section~\ref{sec:injective-lower-bound}
applies it to the injective problem.

\section{Injective upper bounds}

\label{sec:upper}

The standard oracle can suppress output coordinates by a choice of target
state.  This makes one query to \(f_{h,r}\) produce exactly one ordinary
Simon sample, independently of the tag table.

\begin{proposition}[Standard-query upper bound]
\label{prop:xor-upper}
The problem \(\mathrm{GarbageSimon}_n\) is solvable with at most \(n+2\)
standard queries.
\end{proposition}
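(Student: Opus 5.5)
The plan is to show that each standard query to \(f_{h,r}\), with a suitable target state, produces exactly one ordinary Simon sample, and then to run a deterministic-rank version of Simon's algorithm within \(n+2\) queries. For one query, prepare \(N^{-1/2}\sum_x\ket x\) in the input register and set the three target registers \(H,\mathsf X',Z\) to \(\ket{0^n}\ket{0^n}\ket{u}\). Querying \(O_{f_{h,r}}\) gives
\[
  N^{-1/2}\sum_x\ket x\ket{h(x)}\ket{x}\ket{u\oplus r_x}
  =N^{-1/2}\sum_x\ket x\ket{h(x)}\ket{x}\ket u ,
\]
because \(\ket u\) is invariant under XOR by any fixed string. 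The tag register therefore factors out for every table \(r\), including tables with repeated entries. The copy of \(x\) in the middle register is a problem, since it would destroy interference. I therefore choose the middle target to be \(\ket u\) as well. Then \(\ket{u\oplus x}=\ket u\) factors out in the same way, leaving \(N^{-1/2}\sum_x\ket x\ket{h(x)}\) tensored with fixed states. This is exactly the state in Lemma~\ref{lem:simon-sampling}. Measuring \(H\) and Hadamard-measuring the input register therefore yields \(y\) uniform on \(X\) in the NO case and uniform on \(s^\perp\) in the YES case.

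For the decision rule, take \(m=n+2\) independent samples \(y_1,\dots,y_m\) and compute the \(\F_2\)-rank of their span. Output NO if the rank is \(n\) and YES otherwise. In the YES case, all samples lie in the \((n-1)\)-dimensional space \(s^\perp\), so the rank is at most \(n-1\) and the answer is always correct.

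In the NO case, the samples are uniform on \(\F_2^n\). The probability that \(m\) uniform vectors fail to span is at most \(\sum_{\text{hyperplanes }W}2^{-m}=(2^n-1)2^{-m}<2^{n-m}=1/4\), using a union bound over the \(2^n-1\) hyperplanes. Hence the success probability is at least \(3/4\ge 2/3\) with \(n+2\) queries.

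The only step needing care is the choice of target for the copied-input coordinate. Setting it to \(\ket0\) would entangle the input with the copy \(x\) and kill the Simon interference, whereas \(\ket u\) absorbs both the copy and the arbitrary tag \(r_x\). Everything else is the standard calculation of Lemma~\ref{lem:simon-sampling} together with the one-sided union bound.
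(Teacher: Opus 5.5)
Your proposal is correct and is essentially the paper's proof: with targets $\ket{0^n}\ket u\ket u$, each XOR query produces exactly the state of Lemma~\ref{lem:simon-sampling}, and the spanning test on $n+2$ samples fails only in the permutation case, with probability at most $(2^n-1)2^{-(n+2)}<1/4$ by the union bound. You were also right to abandon $\ket{0^n}$ as the target for the copied input in favour of $\ket u$, which is the paper's choice.
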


\begin{proof}
In the register order (query, $H$-target, $\mathsf X$-target, $Z$-target), prepare
\[
  \frac1{\sqrt N}\sum_{x\in X}
  \ket x\ket{0^n}\ket u\ket u.
\]
A query to \(O_{f_{h,r}}\) produces
\[
  \frac1{\sqrt N}\sum_{x\in X}
  \ket x\ket{h(x)}\ket u\ket u,
\]
because XOR translation merely permutes the basis labels in \(\ket u\) and
therefore leaves it invariant.  Lemma
\ref{lem:simon-sampling} therefore gives one sample from \(X\) in the
permutation case and from \(s^\perp\) in the Simon case.

Take \(k=n+2\) independent samples and output ``permutation'' exactly when
they span \(X\).  Simon samples lie in the proper subspace \(s^\perp\), so
they never span \(X\).  For uniform samples from \(X\), failure to span
implies that some nonzero vector is orthogonal to all samples.  A union
bound gives
\[
  \Pr[\text{failure to span}]
  \leq (2^n-1)2^{-k}<\frac14.
\]
Thus the worst-case success probability is greater than \(3/4\).
\end{proof}

The matching forward-erasing upper bound is classical birthday sampling.
It is included to identify the correct scale and to make the final
\(\Theta(\sqrt N)\) statement self-contained.

\begin{proposition}[Forward-erasing upper bound]
\label{prop:erase-upper}
The problem \(\mathrm{GarbageSimon}_n\) is solvable with \(O(\sqrt N)\)
forward-erasing queries.
\end{proposition}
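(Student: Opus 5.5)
We will use a purely classical birthday-collision algorithm that feeds forward-erasing queries only computational-basis inputs and reads their outputs by measurement. Fix a constant \(c\) to be chosen below and set \(k=\lceil c\sqrt N\,\rceil\). The algorithm samples \(x_1,\dots,x_k\) independently and uniformly from \(X\). For each \(i\) it prepares \(\ket{0^n,x_i,0^n}_{H,\mathsf X,Z}\), applies \(E_{f_{h,r}}\) once, and measures in the computational basis. This returns \((h(x_i),x_i,r_{x_i})\) deterministically. The \(Z\)-coordinate is simply discarded, so the tag table \(r\) plays no role and no adjoint is needed. The algorithm outputs ``Simon'' exactly when there are indices \(i,j\) with \(x_i\neq x_j\) and \(h(x_i)=h(x_j)\); otherwise it outputs ``permutation''. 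It uses exactly \(k=O(\sqrt N)\) queries.

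\textbf{Correctness in the NO class.} If \(h\) is a permutation, then \(h(x_i)=h(x_j)\) forces \(x_i=x_j\). So the algorithm never outputs ``Simon'' and is always correct.

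\textbf{Correctness in the YES class.} Now let \(h\) be a Simon function with shift \(s\). Then \(h(x_i)=h(x_j)\) with \(x_i\neq x_j\) holds exactly when \(x_j=x_i\oplus s\). Define the map \(\phi(x)=\min\{x,x\oplus s\}\), using any fixed order on \(X\). Since each pair \(\{x,x\oplus s\}\) has two elements, \(\phi(x_i)\) is uniform over the \(M=N/2\) pair representatives. The labels \(b_i=\mathbf 1[x_i=\phi(x_i)]\) are fair coins, independent of each other and of the \(\phi(x_i)\).

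A witness pair is found whenever two samples share a pair representative but carry different labels. To bound this, look at the first \(m=\lfloor k/2\rfloor\) samples and the remaining samples separately; a simpler route is to count pairs directly. The number of index pairs \(i<j\) with \(\phi(x_i)=\phi(x_j)\) is \(\Omega(k^2/M)\) in expectation. For \(k\le \sqrt M\), the standard birthday estimate gives \(\Pr[\text{some collision of }\phi]\ge 1-e^{-k(k-1)/(2M)}\). Given such a collision, the two labels differ with probability \(1/2\), independently of the event that the representatives collided. Hence
\[
\Pr[\text{success}]\ \ge\ \tfrac12\bigl(1-e^{-k(k-1)/N}\bigr).
\]

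\textbf{Amplification.} The bound above only reaches about \(1/2\). To push the YES-case success above \(2/3\), it suffices to repeat the whole procedure a constant number of times, say \(r=3\). Equivalently, we can take \(c\) larger: pair the samples into disjoint ``first-collision'' events, each of which independently succeeds with probability \(1/2\). Repetition is simpler. The NO-case error stays zero under repetition, because we output ``Simon'' iff any run finds a witness. With \(r=3\) runs and \(c\) chosen so that \(1-e^{-c^2}\ge 0.9\), each run succeeds with probability at least \(0.45\). The failure probability is then at most \(0.55^3<1/3\). The total number of queries is \(3k=O(\sqrt N)\).

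The only point needing care is the conditional independence of the label bits from the representative collision. It holds because \(x\mapsto(\phi(x),b(x))\) is a bijection from \(X\) onto \(\{\text{representatives}\}\times\{0,1\}\), which makes \(\phi(x)\) and \(b(x)\) independent and uniform. There is no genuine obstacle here: the proof is elementary, and every quantum step consists of basis-state queries followed by immediate measurement.
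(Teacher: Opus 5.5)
Your proof is correct and uses the same algorithm as the paper: classical birthday sampling with computational-basis forward queries, comparing only the $H$-values. The probabilistic analysis differs.

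The paper queries a uniformly random set of $k$ \emph{distinct} inputs, so any repeated $h$-value is automatically a Simon pair. It bounds the no-pair probability directly by $\prod_{j<k}(N-2j)/(N-j)\le\exp(-k(k-1)/(2N))$. For $k=\lceil2\sqrt N\rceil$ and $N\ge16$ this is at most $e^{-7/4}<1/3$ in a single run. The case $N<16$, where $k$ could exceed $N/2$, is handled by querying every input.

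Because you sample with replacement, a collision of representatives may be the same input drawn twice. That is why you need the orientation bits, lose the factor $1/2$, and then repeat. In exchange, your bound holds for every $k$, so you need no small-$N$ case split.

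A few small points to clean up:
\begin{enumerate}
\item The qualifier ``for $k\le\sqrt M$'' in your birthday estimate is spurious. As written it would exclude your own choice $k\approx1.52\sqrt N>\sqrt{N/2}$. Drop it: $\prod_{j<k}(1-j/M)\le e^{-k(k-1)/(2M)}$ holds for all $k$ by $1-x\le e^{-x}$.
\item The choice $k=\lceil c\sqrt N\rceil$ only gives $k(k-1)/N\ge c^2-c/\sqrt N$, not $k(k-1)/N\ge c^2$. Take $c$ slightly larger (e.g.\ $c=2$) to guarantee $k(k-1)/N\ge\ln 10$ for all $N\ge2$.
\item The dangling remarks about splitting the samples into halves and about the expected number of colliding pairs can simply be deleted.
\end{enumerate}
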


\begin{proof}
Choose \(k\) distinct inputs uniformly, query them in the computational
basis, and measure the register containing \(h(x)\).  The complete outputs of
\(f_{h,r}\) never collide, because they include \(x\); this test compares
  only the values in the \(H\) register.  A permutation produces no
  collision there.
For a Simon function, the input set is partitioned into
\(N/2\) Simon pairs.  When \(k\leq N/2\), the probability that a uniform
\(k\)-subset contains no complete pair is
\[
  \frac{2^k\binom{N/2}{k}}{\binom Nk}
  =\prod_{j=0}^{k-1}\frac{N-2j}{N-j}
  \leq
  \exp\left(-\frac{k(k-1)}{2N}\right).
\]
For \(N\geq16\), take \(k=\lceil2\sqrt N\rceil\); the last expression is at
most \(e^{-7/4}<1/3\).  For \(N<16\), query all \(N\leq4\sqrt N\) inputs.
\end{proof}

\section{Recording theorem}
\label{sec:recording-theorem}

This section isolates the lower-bound argument shared by the two oracle
models.  Section~\ref{sec:injective-lower-bound} applies the theorem first to
the forward-erasing query; Section~\ref{sec:permutation-extension} later
applies it to a permutation-specific comparison map.

The lower bound formalizes the following picture.  We represent the random
tag table coherently by inaccessible cells, one for each address.  In both
applications, a genuine query is compared with a contraction that suppresses
any component whose retained output depends on $h(i)$ while cell $i$ remains
unrecorded.  In the resulting comparison computation, one query can record
at most one new address and cannot erase an earlier record.  After $T$
queries, each orthogonal component therefore records a set $D$ of at most
$T$ addresses and depends on $h$ only through $h|_D$.  Tracing out the table
cells turns these components into positive-semidefinite operator summands.
For a fixed $D$, the two hard distributions induce the same distribution on
$h|_D$ unless $D$
contains both $x$ and $x\oplus s$ for the hidden Simon shift $s$.  The
probability of this exception is bounded by the usual birthday calculation.
The theorem below isolates these statements and the accompanying hybrid
estimate.

Let $\Hcal_0$ and $\Hcal_1$ be the following distributions on functions
$h:X\to X$.
Under $\Hcal_0$, the map $h$ is a uniformly random permutation of $X$.
Under $\Hcal_1$, first choose a shift $s$ uniformly from
$X\setminus\{0\}$.  Let
\[
   \mathcal P_s=\bigl\{\{x,x\oplus s\}\mid x\in X\bigr\}
\]
be the resulting set of $N/2$ disjoint unordered pairs.  Choose a uniformly
random injection $\lambda:\mathcal P_s\hookrightarrow X$ and set
$h(x)=\lambda(\{x,x\oplus s\})$.  Thus a sample from $\Hcal_1$ satisfies
\[
    h(x)=h(y)
    \quad\Longleftrightarrow\quad
    y\in\{x,x\oplus s\}.
\]

\medskip
\noindent\emph{Four-point example.}
Let $X=\{\mathtt{00},\mathtt{01},\mathtt{10},\mathtt{11}\}$ and fix the
two-address set $D=\{\mathtt{00},\mathtt{01}\}$.  Under $\Hcal_1$, the shift
is uniform over $\mathtt{01},\mathtt{10},\mathtt{11}$.  If
$s\in\{\mathtt{10},\mathtt{11}\}$, the two points of $D$ lie in distinct
Simon pairs, so $h|_D$ is a uniformly random injection, exactly as under
$\Hcal_0$.  If $s=\mathtt{01}$, they form one Simon pair and their values are
equal, with the common value uniform in $X$.  Thus the two restriction
distributions differ for exactly one of the three possible shifts.  The
exceptional probability is $1/3$, which equals
$\binom{|D|}{2}/(N-1)$ here.  This is a concrete instance of the birthday
comparison used below.

Independently of $h$, sample the tag table $r$ uniformly from $X^X$ and keep
it fixed throughout the computation.  To represent the resulting classical mixture by a single pure
global evolution, introduce one
inaccessible table-cell register for each address.  For $i\in X$, let
$R_i\cong\mathbb C^N$ have computational basis
$\{\ket z\mid z\in X\}$.  Using the state $\ket u$ defined above, let
\[
    \mathcal R=\bigotimes_{i\in X}R_i,
    \qquad
    \ket U=\bigotimes_{i\in X}\ket u.
\]
Expanding $\ket U$ in the computational basis gives
\[
    \ket U=N^{-N/2}\sum_{r\in X^X}\ket r,
    \qquad
    \ket r=\bigotimes_{i\in X}\ket{r_i}.
\]
On a branch labelled by a basis table $\ket r$, the coherent extension of a
fixed-table query behaves as the classical oracle with table $r$ and leaves the label
$\ket r$ unchanged.  It may nevertheless use that label as a coherent
control and entangle $\mathcal R$ with the algorithm system.  Tracing out
$\mathcal R$ after all queries removes cross terms between distinct basis
tables and gives exactly the classical average over one uniformly random
table sampled at the beginning and reused on every query.  In particular,
repeated queries to address $i$ use the same cell $R_i$ and the same value
$r_i$; purification does not refresh the tag.

On one cell define
\[
    \Pi_{0,i}=(\ket u\!\bra u)_{R_i},
    \qquad
    \Pi_{1,i}=\Id-\Pi_{0,i}.
\]
We call $\operatorname{im}\Pi_{0,i}$ the \emph{unrecorded subspace} and
$\operatorname{im}\Pi_{1,i}$ the \emph{recorded subspace} of $R_i$.  The
projectors define the analytical decomposition used below.
For $D\subseteq X$, define the table sector
\[
    \mathcal R_D
      =\bigotimes_{i\in D}\operatorname{im}\Pi_{1,i}
       \otimes
       \bigotimes_{i\notin D}\operatorname{im}\Pi_{0,i},
    \qquad
    \Pi_D\text{ the projector onto }\mathcal R_D.
\]
The sectors give the orthogonal decomposition
$\mathcal R=\bigoplus_{D\subseteq X}\mathcal R_D$, and $\ket U$ lies in the
empty sector $\mathcal R_\varnothing$.  The set $D$ is the index of one
direct-sum component.

We now state the hypotheses of the recording-to-birthday theorem.  At each
time, let $\mathsf S$ denote all retained registers other than the
table-purification register $\mathcal R$.  The system $\mathsf S$ always
includes the algorithm system $\mathsf{Alg}$ and may also include auxiliary
registers appended only by the comparison map.  Its dimension may therefore
grow from one query slot to the next.  If the comparison map appends an
auxiliary register that is absent from the genuine interface, we include that
register in $\mathsf S_{\rm out}$ for both maps and follow the genuine query
by the fixed zero-state embedding into that register.  Thus the genuine and
comparison maps always have the same domain and codomain.  Let $\theta$ be an
auxiliary random parameter, independent of $h$ and $r$, whose distribution is
the same in both hard classes.  The
injective application has no additional randomness and takes $\theta$ to be
deterministic; in the permutation application, $\theta$ is the independent
uniform reference permutation $\pi$.  For a
particular query slot, write
$\mathsf S_{\rm in}$ and $\mathsf S_{\rm out}$ for the retained systems
before and after the query.  For fixed $h$ and $\theta$, let
\[
   \mathcal Q_{h,\theta}:\mathsf S_{\rm in}\otimes\mathcal R
      \longrightarrow \mathsf S_{\rm out}\otimes\mathcal R
\]
be the query map being approximated, and assume that it is an isometry.
For each basis table $r\in X^X$, assume there is a fixed-table query isometry
$\mathcal Q^{(r)}_{h,\theta}:\mathsf S_{\rm in}\to\mathsf S_{\rm out}$ such
that, for every $\ket\psi\in\mathsf S_{\rm in}$,
\[
   \mathcal Q_{h,\theta}(\ket\psi\ket r)
      =(\mathcal Q^{(r)}_{h,\theta}\ket\psi)\ket r.
\]
Thus the query may use $r$ coherently but does not change its basis label.
Input and output systems may have different dimensions,
and spectator registers are suppressed.  The notation also suppresses the
query-slot index; the conditions below must hold at every slot with the same
bound $\delta$.
The query map need not respect the sector decomposition: it may place
$h(i)$ in an algorithm-output register while cell $R_i$ still has an
unrecorded component.  The proof therefore compares it with a linear map
$\widetilde{\mathcal Q}_{h,\theta}$ that is close enough to control the hybrid
error while satisfying the sector-transition and locality conditions below.
Throughout the recording arguments, a tilde marks a map, vector, or operator
obtained from this analysis-only replacement.
We call such a map a
\emph{recording contraction} when it satisfies the following three
conditions.

\begin{enumerate}
\item\label{item:recording-close}
For some $\delta\geq0$ independent of $h$, $\theta$, and the query slot,
\[
    \|\mathcal Q_{h,\theta}-\widetilde{\mathcal Q}_{h,\theta}\|
       \leq \delta,
    \qquad
    \|\widetilde{\mathcal Q}_{h,\theta}\|\leq 1.
\]

\item\label{item:recording-sector}
For sectors $D,D'\subseteq X$, define the sector block
\[
   \widetilde{\mathcal Q}_{h,\theta}^{D'\leftarrow D}
    = (\Id_{\mathsf S_{\rm out}}\otimes\Pi_{D'})
       \widetilde{\mathcal Q}_{h,\theta}
      (\Id_{\mathsf S_{\rm in}}\otimes\Pi_D).
\]
This block vanishes unless
\[
    D\subseteq D'
    \quad\text{and}\quad
    |D'\setminus D|\leq 1.
\]

\item\label{item:recording-local}
The nonzero sector block ending in $D'$ uses no value of $h$ outside $D'$.
Formally, whenever $h|_{D'}=h'|_{D'}$,
\[
   \widetilde{\mathcal Q}_{h,\theta}^{D'\leftarrow D}
      =\widetilde{\mathcal Q}_{h',\theta}^{D'\leftarrow D}.
\]
\end{enumerate}

The algorithm may interleave the queries with arbitrary oracle-independent
channels.  We purify all such channels, so every inter-query map is an
isometry on $\mathsf S$ tensored with the identity on $\mathcal R$.  Only the
table-purification register $\mathcal R$ is traced out in the theorem.

\begin{theorem}[Recording-to-birthday theorem]
\label{thm:recording-to-birthday}
Assume conditions~\ref{item:recording-close}--\ref{item:recording-local}, initialize
$\mathcal R$ in $\ket U$, and let the strategy make $T$ queries.  For fixed
$h$ and $\theta$, run the purified computation, trace out $\mathcal R$, and
denote the resulting retained density operator by $\rho_{h,\theta}$.  As
explained above, this operator is the classical average over one uniformly
sampled table $r$ that is held fixed for all queries.  For $i\in\{0,1\}$,
define
\[
   \rho_i=\mathbb E_\theta\,
      \mathbb E_{h\sim\Hcal_i}[\rho_{h,\theta}].
\]
Let $\epsilon=T\delta$.
If $\epsilon\leq 1$, then
\[
    \|\rho_0-\rho_1\|_1
       \leq
       \frac{T(T-1)}{N-1}+6\epsilon .
\]
\end{theorem}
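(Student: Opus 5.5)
The proof runs a hybrid from the genuine computation to the recording one, and then uses the sector structure of the recording computation to reduce to a birthday comparison. Fix $h$ and $\theta$. Let $\ket{\psi_{h,\theta}}$ be the final purified state on $\mathsf S\otimes\mathcal R$ of the genuine computation. Let $\ket{\widetilde\psi_{h,\theta}}$ be the state obtained by replacing every query $\mathcal Q_{h,\theta}$ by $\widetilde{\mathcal Q}_{h,\theta}$, while keeping the same inter-query isometries and the initial state $\ket{\psi_0}\ket U$.

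\textbf{Step 1: hybrid bound.} Because all maps are contractions and condition~\ref{item:recording-close} holds at every slot, the standard telescoping argument gives $\|\ket{\psi_{h,\theta}}-\ket{\widetilde\psi_{h,\theta}}\|\le T\delta=\epsilon$. By \eqref{eq:pure-to-trace}, the pure-state projectors then differ by at most $2\epsilon$ in trace norm. Partial trace over $\mathcal R$ and averaging do not increase the trace norm, so $\|\rho_i-\widetilde\rho_i\|_1\le 2\epsilon$. Here $\widetilde\rho_i$ is the corresponding average of $\widetilde\rho_{h,\theta}=\Tr_{\mathcal R}\ket{\widetilde\psi_{h,\theta}}\bra{\widetilde\psi_{h,\theta}}$.

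\textbf{Step 2: sector expansion.} Insert $\sum_D\Pi_D=\Id$ after every query in the recording computation. The initial table state lies in $\mathcal R_\varnothing$, and the inter-query maps act as the identity on $\mathcal R$, so they preserve sectors. Condition~\ref{item:recording-sector} therefore lets us write
\[
\ket{\widetilde\psi_{h,\theta}}=\sum_{|D|\le T}\ket{\widetilde\psi^{D}_{h,\theta}},
\]
where $\ket{\widetilde\psi^{D}_{h,\theta}}=(\Id\otimes\Pi_D)\ket{\widetilde\psi_{h,\theta}}$ is a sum over increasing chains $\varnothing=D_0\subseteq D_1\subseteq\cdots\subseteq D_T=D$. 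Each step of a chain adds at most one address. Every block in such a chain ends in some $D_t\subseteq D$, so condition~\ref{item:recording-local} implies that $\ket{\widetilde\psi^D_{h,\theta}}$ depends on $h$ only through $h|_D$. The sector projectors are orthogonal on $\mathcal R$, so tracing out $\mathcal R$ kills the cross terms:
\[
\widetilde\rho_{h,\theta}=\sum_D \sigma^D_{h|_D,\theta},
\qquad
\sigma^D_{h|_D,\theta}=\Tr_{\mathcal R}\ket{\widetilde\psi^D_{h,\theta}}\bra{\widetilde\psi^D_{h,\theta}}.
\]
Each $\sigma^D$ is positive semidefinite.

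\textbf{Step 3: coupling of the restrictions.} For fixed $D$, let $B_D$ be the event under $\Hcal_1$ that $D$ contains a pair $\{x,x\oplus s\}$. The number of pairs in $D$ is $\binom{|D|}{2}$, and each pair has probability $1/(N-1)$ of matching $s$, so $\Pr[B_D]\le\binom{T}{2}/(N-1)$. Conditioned on $\neg B_D$, the restriction $h|_D$ is a uniformly random injection $D\to X$, exactly as under $\Hcal_0$; the four-point example illustrates this. Hence
\[
\mathbb E_{\Hcal_0}\sigma^D-\mathbb E_{\Hcal_1}\sigma^D
=\Pr[B_D]\Bigl(\mathbb E_{\Hcal_0}\sigma^D-\mathbb E_{\Hcal_1}[\sigma^D\mid B_D]\Bigr).
\]
Both terms inside the parentheses are positive semidefinite, so the triangle inequality bounds the trace norm of the difference by $\Pr[B_D]\bigl(\mathbb E_0\Tr\sigma^D+\mathbb E_1[\Tr\sigma^D\mid B_D]\bigr)$.

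\textbf{Step 4: summing over sectors (the main obstacle).} The conditional weights $\mathbb E_1[\Tr\sigma^D\mid B_D]$ are not obviously controlled, so the sum over $D$ needs care. The first term is easy: $\sum_D\Tr\sigma^D=\|\widetilde\psi\|^2\le 1$, so it contributes at most $\binom T2/(N-1)$. For the second term, rewrite
\[
\sum_D\Pr[B_D]\,\mathbb E_1[\Tr\sigma^D\mid B_D]=\mathbb E_{h\sim\Hcal_1}\sum_{D\ni\text{pair}}\Tr\sigma^D_{h|_D},
\]
which is the $\Hcal_1$-weight of the sectors that contain a hidden pair. To bound it, apply the same reasoning to weights rather than operators, and write $w_i^{\rm bad}$ for the expected weight under $\Hcal_i$ on sectors $D$ with property $B_D$, where $B_D$ is defined through the Simon pairing. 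The weight of the good sectors is identical under the two distributions: for each $D$, conditioned on $\neg B_D$ the restriction $h|_D$ has the same law under both, and the $\Hcal_0$-weight on good sectors is at least $(1-\Pr[B_D])\mathbb E_0\Tr\sigma^D$ summed over $D$. The total weights of the two classes are each within $2\epsilon$ of $1$, because genuine states have norm $1$. Therefore the bad $\Hcal_1$-weight is at most $\sum_D\Pr[B_D]\mathbb E_0\Tr\sigma^D+O(\epsilon)\le\binom T2/(N-1)+2\epsilon$. I expect this bookkeeping, namely getting the constants to come out as exactly $T(T-1)/(N-1)+6\epsilon$ rather than something slightly larger, to be the delicate part.

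\textbf{Step 5: assembly.} Summing the two terms of Step 4 gives $\|\widetilde\rho_0-\widetilde\rho_1\|_1\le T(T-1)/(N-1)+2\epsilon$. Adding the $2\epsilon+2\epsilon$ from Step 1 yields the stated bound $T(T-1)/(N-1)+6\epsilon$. The assumption $\epsilon\le1$ is used only to keep the norm estimates linear in $\epsilon$.
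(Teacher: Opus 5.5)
Your proposal is correct and follows the paper's proof essentially step for step. The common steps are the telescoping hybrid, the sector decomposition with locality and positive-semidefinite sector contributions, the decomposition of the $\Hcal_1$ restriction law into the $\Hcal_0$ law plus an exceptional part of weight $\Pr[B_D]$, and the weight bookkeeping, where your ``bad $\Hcal_1$-weight'' bound is exactly the paper's $S_C\le S_A+|a-b|$. The constants in your Step~4 are not delicate: both total weights lie in $[(1-\epsilon)^2,1]$, so $|a-b|\le 2\epsilon$. This gives $T(T-1)/(N-1)+2\epsilon$ before the two $2\epsilon$ hybrid terms are added.
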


The conclusion follows from four ingredients.  First, replacing all $T$
queries by the recording contraction costs at most $\epsilon$ in Euclidean
norm.  Second, the transition rule confines the recording vector to sectors
$D$ with $|D|\leq T$, while locality makes the $D$-sector depend only on
$h|_D$.  Third, tracing out $\mathcal R$ deletes cross terms between
orthogonal sectors and leaves a positive sum of sector contributions.
Fourth, the two hard distributions induce the same probability distribution for $h|_D$
unless $D$ contains a pair separated by the hidden Simon shift, an event of
birthday probability at most $\binom{|D|}{2}/(N-1)$.  The proof below makes
these four steps quantitative.

\begin{proof}
Fix $\theta$, and for $i\in\{0,1\}$ let
$\rho_{i,\theta}=\mathbb E_{h\sim\Hcal_i}\rho_{h,\theta}$.  Let
$\ket{\Psi_{h,\theta}}$ be the final global
pure state obtained with the query maps $\mathcal Q_{h,\theta}$, and let
$\ket{\widetilde\Psi_{h,\theta}}$ be obtained by replacing every query by
$\widetilde{\mathcal Q}_{h,\theta}$.  A telescoping hybrid over the $T$ query positions
gives
\begin{equation}
   \bigl\|\ket{\Psi_{h,\theta}}
            -\ket{\widetilde\Psi_{h,\theta}}\bigr\|
      \leq \epsilon .
   \label{eq:recording-vector-hybrid}
\end{equation}
Indeed, in each hybrid term the maps before the changed query are
contractions and the maps after it are isometries or contractions.  The
query computation is normalized, while the recording computation is a
composition of contractions.  Hence
\begin{equation}
   (1-\epsilon)^2
     \leq \|\widetilde\Psi_{h,\theta}\|^2
     \leq 1.
   \label{eq:recording-norm-window}
\end{equation}

The simultaneous induction in Appendix~\ref{app:sector-induction} proves
that, for every fixed $\theta$,
\[
   \ket{\widetilde\Psi_{h,\theta}}
      =\sum_{\substack{D\subseteq X\\|D|\leq T}}
          \ket{\widetilde\Psi_{h,\theta,D}},
   \qquad
   \ket{\widetilde\Psi_{h,\theta,D}}
      =(\Id\otimes\Pi_D)\ket{\widetilde\Psi_{h,\theta}}.
 \]
Moreover, if $h|_D=h'|_D$, then
\[
   \ket{\widetilde\Psi_{h,\theta,D}}
      =\ket{\widetilde\Psi_{h',\theta,D}}.
\]

After tracing out the table-purification register, the cross terms between distinct sectors
vanish.  Fix $\theta$ and suppress it from the notation.  If
$\varphi:D\to X$ occurs as $h|_D$ for some $h$ in the support of either hard
distribution, define
\begin{equation}
   \sigma(D,\varphi)
      =\Tr_{\mathcal R}
        \left(
          \ket{\widetilde\Psi_{h,\theta,D}}
          \!\bra{\widetilde\Psi_{h,\theta,D}}
        \right),
   \qquad h|_D=\varphi.
   \label{eq:sigma-D-varphi-definition}
\end{equation}
This is well defined by locality.  For a function $\varphi$ that is not an
attainable restriction in either hard distribution, set
$\sigma(D,\varphi)=0$; such values never enter the averages below.  For every
pair $(D,\varphi)$, the value $\sigma(D,\varphi)$ is a positive semidefinite
operator on the final retained system $\mathsf S$.  The recording computation
has retained operator
\[
   \widetilde\rho_{h,\theta}
      =\sum_{\substack{D\subseteq X\\|D|\leq T}}
          \sigma(D,h|_D).
\]

The only use of the sampling rule for $h$ is the following comparison of its
restrictions.  For fixed $D$, the two hard distributions induce the same
distribution on $h|_D$ unless $D$ contains a complete Simon pair.  Define
$\Delta(D)=\{x\oplus y\mid x,y\in D,\ x\neq y\}$.
The set $D$, and hence $\Delta(D)$, is fixed before the random shift $s$ is
drawn.  Under $\Hcal_0$, the restriction $h|_D$ is a uniformly random
injection $D\hookrightarrow X$.  Under $\Hcal_1$, condition on $s$.  If
$s\notin\Delta(D)$, no pair in $\mathcal P_s$ contains two distinct points
of $D$.  Hence $x\mapsto\{x,x\oplus s\}$ is injective on $D$, and the
random injection $\lambda$ again makes $h|_D$ a uniformly random injection
$D\hookrightarrow X$.  Consequently, if $\mu_{0,D}$ and
$\mu_{1,D}$ are the two restriction distributions, then
\[
   \mu_{1,D}=(1-p_D)\mu_{0,D}+p_D\nu_D,
   \qquad
   p_D=\frac{|\Delta(D)|}{N-1}
      \leq \frac{\binom{|D|}{2}}{N-1},
\]
for some probability distribution $\nu_D$.  If $p_D=0$, the choice of
$\nu_D$ is immaterial.

This is the birthday effect in the theorem's name: among $|D|$ addresses
there are at most $\binom{|D|}{2}$ nonzero pairwise XOR differences, while
the shift is uniform among the $N-1$ nonzero strings.

For the remainder of the proof, every sum over $D$ ranges over subsets of
$X$ of size at most $T$.  Define the positive operators
\begin{equation}
\begin{aligned}
   A_D&=\mathbb E_{\varphi\sim\mu_{0,D}}\sigma(D,\varphi),
   &
   C_D&=\mathbb E_{\varphi\sim\nu_D}\sigma(D,\varphi),
   \\
   B_D&=\mathbb E_{\varphi\sim\mu_{1,D}}\sigma(D,\varphi)
       =(1-p_D)A_D+p_D C_D.
\end{aligned}
\label{eq:recording-ABC}
\end{equation}
Thus $A_D$ is the average contribution of the $D$-sector under the common
nonexceptional restriction distribution; it need not have trace one.  Its
trace is the expected weight of that sector in the $\Hcal_0$-averaged
recording operator for the fixed value of $\theta$.

Let
\[
   a=\Tr\sum_D A_D,
   \qquad
   b=\Tr\sum_D B_D.
\]
Equation~\eqref{eq:recording-norm-window} implies
\[
   |a-b|
      \leq 1-(1-\epsilon)^2
      \leq 2\epsilon.
\]
Define
\[
   S_A=\sum_D p_D\Tr A_D,
   \qquad
   S_C=\sum_D p_D\Tr C_D.
\]
Taking traces in~\eqref{eq:recording-ABC} and summing over $D$ gives
$b=a-S_A+S_C$, and therefore
\[
    S_C\leq S_A+|a-b|.
\]
Since $A_D,C_D\succeq0$, positivity and the triangle inequality yield
\begin{equation}
\begin{aligned}
   \left\|\sum_D A_D-\sum_D B_D\right\|_1
     &\leq \sum_D p_D\|A_D-C_D\|_1 \\
     &\leq S_A+S_C \\
     &\leq 2S_A+|a-b| \\
     &\leq 2\max_{|D|\leq T}p_D+2\epsilon \\
     &\leq \frac{T(T-1)}{N-1}+2\epsilon.
\end{aligned}
\label{eq:recording-state-bound}
\end{equation}
In the penultimate line we used
$S_A\leq(\max_{|D|\leq T}p_D)a\leq\max_{|D|\leq T}p_D$.

It remains to return from the recording computation to the
$\mathcal Q$-computation.  Inequality~\eqref{eq:pure-to-trace} and
\eqref{eq:recording-vector-hybrid}, followed by partial trace and
averaging, therefore contributes at most $2\epsilon$ in each hard class.
Adding the two hybrid errors to
\eqref{eq:recording-state-bound} proves, for fixed $\theta$,
\[
   \|\rho_{0,\theta}-\rho_{1,\theta}\|_1
       \leq \frac{T(T-1)}{N-1}+6\epsilon.
\]
Finally average over $\theta$.  Its distribution is identical in the two
hard classes, and convexity of the trace norm preserves the same bound.
\end{proof}

\section{Injective lower bound}
\label{sec:injective-lower-bound}

We first apply the recording-to-birthday theorem to the injective
forward-erasing query.
It remains to verify the three hypotheses for its purified query map;
Theorem~\ref{thm:recording-to-birthday} then supplies the trace-distance and
birthday estimates.

The injective oracle is $f_{h,r}(x)=(h(x),x,r_x)$; it is injective because
its output contains $x$.  Let $V_h$ be the purified forward-erasing query,
acting coherently on the table register $\mathcal R$.  With input order
$(\mathsf X,\mathcal R)$ and output order $(H,\mathsf X,Z,\mathcal R)$, its basis action is
\[
 V_h\ket x\ket r=\ket{h(x),x,r_x}\ket r.
\]
The table label is preserved basiswise but can become entangled with the
algorithm-output tag register $Z$ on a superposition of tables.

The output of a query to $x$ decomposes into recorded and unrecorded
components of $R_x$.  The unrecorded component violates locality: the
algorithm output contains $h(x)$ while $x$ is absent from the sector label.
The recording contraction removes this component.

Use the returned address in $\mathsf X$ to define the projector
\[
   \Pi_{\mathrm{rec}}
      =\sum_{x\in X}(\ket x\!\bra x)_{\mathsf X}\otimes\Pi_{1,x},
\]
where identities on $H$, $Z$, the other table cells, and the algorithm
workspace are implicit.  Define
\[
    \widetilde V_h=\Pi_{\mathrm{rec}}V_h.
\]
This contraction appears only in the analysis; the algorithm continues to
use the original query $V_h$.  In the notation of
Theorem~\ref{thm:recording-to-birthday},
$\mathcal Q_{h,\theta}=V_h$,
$\widetilde{\mathcal Q}_{h,\theta}=\widetilde V_h$, and $\theta$ is
deterministic.

To compute the approximation error, fix $i\in X$ and write
$\Pi_0=(\ket u\!\bra u)_{R_i}$ and $\Pi_1=\Id-\Pi_0$.  For an arbitrary
reference system $\mathsf K$, a vector on $R_i\otimes\mathsf K$ has the form
$\sum_z\ket z_{R_i}\ket{\alpha_z}_{\mathsf K}$.  With output order
$(R_i,Z,\mathsf K)$, copying the tag value to $Z$ and projecting the table
cell by $\Pi_0$ gives
\[
   \frac1{\sqrt N}\ket u_{R_i}
      \sum_z\ket z_Z\ket{\alpha_z}_{\mathsf K}.
\]
Its squared norm is
$N^{-1}\sum_z\|\alpha_z\|^2$, exactly $1/N$ times the input squared norm.
Hence the discarded $\Pi_0$ component has operator norm
$N^{-1/2}$.  The algorithm-output copy of $x$ makes the error ranges for different
addresses orthogonal, so the same norm holds for a coherent superposition of
addresses.  Therefore
\[
   \|V_h-\widetilde V_h\|=\frac1{\sqrt N},
   \qquad
   \|\widetilde V_h\|\leq1.
\]
On a basis query branch $x$, the projector $\Pi_{\mathrm{rec}}$ forces cell
$R_x$ into its recorded subspace and acts as the identity on every other
cell.  It therefore maps a table sector $\mathcal R_D$ into
$\mathcal R_{D\cup\{x\}}$; no recorded cell becomes unrecorded.  The query's
dependence on $h$ is only through $h(x)$, and the output sector contains $x$.
Consequently, the sector-transition and locality conditions
\ref{item:recording-sector} and~\ref{item:recording-local} hold.  Together with the norm
estimate, these observations verify all three conditions with
$\delta=N^{-1/2}$.

\begin{corollary}[Injective trace-distance bound and separation]
\label{cor:injective-from-recording}
Let $\rho^{\mathrm{inj}}_0$ and $\rho^{\mathrm{inj}}_1$ be the average
final states on the algorithm system $\mathsf{Alg}$ of any $T$-query
forward-erasing strategy under
$h\sim\Hcal_0$ and $h\sim\Hcal_1$, respectively, with one uniformly random
tag table sampled at the beginning and reused on every query.  If
$T\leq\sqrt N$, then
\begin{equation}
   \|\rho^{\mathrm{inj}}_0-\rho^{\mathrm{inj}}_1\|_1
      \leq
      \frac{T(T-1)}{N-1}+\frac{6T}{\sqrt N}.
   \label{eq:injective-quantitative-bound}
\end{equation}
Every bounded-error forward-erasing algorithm for
$\mathrm{GarbageSimon}_n$ therefore uses
more than $\sqrt N/16$ queries.  Consequently, for every $n\geq1$,
\[
  Q_{\mathrm{XOR}}(\mathrm{GarbageSimon}_n)\leq n+2,
  \qquad
  Q_{\mathrm{erase}}(\mathrm{GarbageSimon}_n)=\Theta(2^{n/2}).
\]
\end{corollary}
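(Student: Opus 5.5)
The proof applies Theorem~\ref{thm:recording-to-birthday} directly, using the verification just given. The comparison map is $\widetilde{\mathcal Q}_{h,\theta}=\widetilde V_h=\Pi_{\mathrm{rec}}V_h$ with deterministic $\theta$, and it satisfies conditions~\ref{item:recording-close}--\ref{item:recording-local} with $\delta=N^{-1/2}$. For $T\le\sqrt N$ we have $\epsilon=T\delta=T/\sqrt N\le1$, so the theorem applies and yields \eqref{eq:injective-quantitative-bound} for the retained operators $\rho_0,\rho_1$ on $\mathsf S$. The theorem's $\rho_i$ is exactly the average over $h\sim\Hcal_i$ of the classical average over one fixed uniformly random table, as explained before the theorem. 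Tracing out any auxiliary registers does not increase the trace norm, so the bound descends to $\rho^{\mathrm{inj}}_0,\rho^{\mathrm{inj}}_1$ on $\mathsf{Alg}$. The injective comparison appends no auxiliary register, so $\mathsf S=\mathsf{Alg}$ in any case.

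For the query lower bound, suppose an algorithm with $T\le\sqrt N/16$ queries succeeds with probability at least $2/3$ on every promised input. Every $h$ in the support of $\Hcal_0$ is a permutation and every $h$ in the support of $\Hcal_1$ is a Simon function, and every table $r$ is allowed by the promise. By the distributional direction of Yao's principle, the algorithm therefore distinguishes equally likely $\rho^{\mathrm{inj}}_0$ and $\rho^{\mathrm{inj}}_1$ with probability at least $2/3$. Helstrom's theorem then forces $\frac12+\frac14\|\rho^{\mathrm{inj}}_0-\rho^{\mathrm{inj}}_1\|_1\ge 2/3$, that is, $\|\rho^{\mathrm{inj}}_0-\rho^{\mathrm{inj}}_1\|_1\ge 2/3$. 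On the other hand, with $T\le\sqrt N/16$ the right side of \eqref{eq:injective-quantitative-bound} is at most
\[
  \frac{N/256}{N-1}+\frac{6}{16}.
\]
For $N\ge2$ this is at most $\frac{2}{256}+\frac38<\frac23$, a contradiction. Hence more than $\sqrt N/16$ queries are needed.

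The separation then follows by combining this $\Omega(\sqrt N)$ bound with Proposition~\ref{prop:erase-upper}, which gives $O(\sqrt N)$, and Proposition~\ref{prop:xor-upper}, which gives the $n+2$ XOR upper bound. Together these give $Q_{\mathrm{erase}}=\Theta(2^{n/2})$.

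The main point needing care is the case $N=2$ ($n=1$), since there $\sqrt N/16<1$ and the bound is vacuous but still true. The lower bound ``more than $\sqrt N/16$'' holds trivially there because zero queries cannot succeed: with no queries the two states coincide. The $\Theta$ statement is asymptotic with constants absorbing small $n$. The other subtle point is that $\delta$ must be uniform over query slots and over all $h$, including $h$ in both supports. This holds because the norm computation of $V_h-\widetilde V_h$ never used any property of $h$.
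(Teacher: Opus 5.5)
Your proposal is correct and follows essentially the same route as the paper: you verify the recording hypotheses with $\delta=N^{-1/2}$, apply Theorem~\ref{thm:recording-to-birthday} with $\epsilon=T/\sqrt N\le 1$, and then combine the bound $\frac38+\frac1{128}<\frac23$ at $T\le\sqrt N/16$ with Helstrom's theorem and Propositions~\ref{prop:erase-upper} and~\ref{prop:xor-upper}. Your separate treatment of $N=2$ is harmless but unnecessary, since for $N<256$ the hypothesis $T\le\sqrt N/16$ already forces $T=0$, and the general argument covers that case.
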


\begin{proof}
The one-cell calculation above and the sector transition rule verify the
hypotheses of Theorem~\ref{thm:recording-to-birthday} with
$\delta=N^{-1/2}$.
Since $\epsilon=T/\sqrt N$, that theorem gives
\eqref{eq:injective-quantitative-bound}.

Suppose a strategy succeeds with probability at least $2/3$ on every
promised injection.  It then distinguishes the two equal-prior hard
distributions with success probability at least $2/3$.  If
$T\leq\sqrt N/16$, then, for every $N\geq2$,
\[
 \|\rho^{\mathrm{inj}}_0-\rho^{\mathrm{inj}}_1\|_1
 \leq \frac6{16}+\frac{N}{256(N-1)}
 \leq \frac38+\frac1{128}
 =\frac{49}{128}<\frac23.
\]
Here we used $T(T-1)\leq T^2$ and $N/(N-1)\leq2$.
Helstrom's bound then limits the average success probability to less than
$\frac12+\frac14\cdot\frac23=\frac23$, a contradiction.  Hence
$T>\sqrt N/16$.  Proposition~\ref{prop:erase-upper} supplies the matching
$O(\sqrt N)$ upper bound, and Proposition~\ref{prop:xor-upper} supplies the
standard-query bound.
\end{proof}

The permutation proof uses Theorem~\ref{thm:recording-to-birthday} itself,
not merely Corollary~\ref{cor:injective-from-recording}.  A permutation
algorithm can query background inputs that do not occur in the injective
interface, so the recording hypotheses must also be verified for those
queries.

\section{Permutation problem}
\label{sec:permutation-extension}

The permutation promise embeds the injective map on a set of special
inputs.  A permutation algorithm may also query the rest of the domain,
where the completion depends on the hidden image set.  We couple a uniform
completion to a uniform permutation and show that the corresponding query
processes differ only on a sparse random set.  The remaining lower bound then
follows from the recording-to-birthday theorem.

Let $\Omega=\{0,1\}\times X^3\cong\{0,1\}^{3n+1}$ and
$L=|\Omega|=2^{3n+1}=2N^3$.  We write its elements as tuples
$(\beta,u,v,t)$, where $\beta$ is a bit and $u,v,t\in X$.
Under the identification with $\{0,1\}^{3n+1}$, $\oplus$ denotes bitwise
XOR, acting componentwise on these tuples.
For an $\Omega$-valued register $Y$,
$\ket{\beta,u,v,t}_Y$ denotes the corresponding basis state.
For $x\in X$, define
\[
   a_x=(0,x,0^n,0^n),
   \qquad
   F_x=(1,h(x),x,r_x),
\]
and let $\mathcal A=\{a_x\mid x\in X\}$ and
$\mathcal F=\{F_x\mid x\in X\}$.
In a prescribed output $F_x$, the last three coordinates contain the $h$-value,
address, and tag.
The tuples $F_x$ are pairwise distinct because they contain $x$, and
$\mathcal A\cap\mathcal F=\varnothing$ because their marker bits differ.
Elements of $\mathcal A$ are \emph{special inputs}; elements of
$\Omega\setminus\mathcal A$ are \emph{background inputs}.  A permutation
algorithm may query both sets in superposition.

\begin{remark}[Input copy]
\label{rem:input-copy}
The explicit $x$-coordinate makes the values $F_x$ visibly distinct and
lets the candidate-branch analysis read the relevant index directly from an
output tuple.  It is convenient but not necessary for these two promise
classes.  Remark~\ref{rem:one-bit-compression}, after the detailed proof,
replaces it by a single bit and improves the exponent measured in terms of
the permutation-domain size.  We retain $x$ here to keep the construction
transparent.
\end{remark}

Figure~\ref{fig:permutation-split} summarizes which values are prescribed by
the promise and which are chosen by a completion.

\begin{figure}[H]
\centering
\begin{tikzpicture}[
  block/.style={draw,rounded corners=2pt,minimum width=4.2cm,
                minimum height=1cm,align=center,inner sep=4pt},
  every node/.style={font=\small}
]
  \node at (0,1.65) {\textbf{domain $\Omega$}};
  \node at (7.2,1.65) {\textbf{codomain $\Omega$}};
  \node[block,fill=blue!8] (A) at (0,.65)
    {$\mathcal A=\{a_x\mid x\in X\}$\\special inputs, $|\mathcal A|=N$};
  \node[block,fill=black!3] (Ac) at (0,-.75)
    {$\Omega\setminus\mathcal A$\\background inputs};
  \node[block,fill=blue!8] (F) at (7.2,.65)
    {$\mathcal F=\{F_x\mid x\in X\}$\\prescribed outputs, $|\mathcal F|=N$};
  \node[block,fill=black!3] (Fc) at (7.2,-.75)
    {$\Omega\setminus\mathcal F$\\remaining outputs};
  \draw[->,thick] (A.east) --
    node[above,align=center] {prescribed map\\$a_x\mapsto F_x$} (F.west);
  \draw[->,thick] (Ac.east) --
    node[above,align=center] {completion\\arbitrary bijection} (Fc.west);
\end{tikzpicture}
\caption{The permutation promise fixes the map on the special inputs.  A
completion maps the background inputs bijectively to the remaining outputs.}
\label{fig:permutation-split}
\end{figure}

A promised permutation $P:\Omega\to\Omega$ satisfies
\begin{equation}
   P(a_x)=F_x
   \qquad(x\in X),
   \label{eq:permutation-prescription}
\end{equation}
and is arbitrary on the remaining inputs subject to being a permutation.  In
the NO class, $h$ is a permutation; in the YES class, $h$ is a Simon function with
a nonzero shift.  The tag table is unrestricted.  Since
$a_x\mapsto F_x$ is a bijection from $\mathcal A$ to $\mathcal F$, every
pair $(h,r)$ admits a completion: choose an arbitrary bijection
$\Omega\setminus\mathcal A\to\Omega\setminus\mathcal F$.
For fixed $(h,r)$, a \emph{completion} is any such permutation $P$; a
uniform completion is sampled uniformly from the $(L-N)!$ choices.
We call this promise problem $\mathrm{PermutationGarbageSimon}_n$.

Definition~\ref{def:standard-xor} specializes to
\[
   O_P\ket{w,z}=\ket{w,z\oplus P(w)}.
\]
The forward in-place interface is $U_P$ from
Definition~\ref{def:forward-in-place}.

The upper bounds are the same as for the injective problem.  Use the register
order (query, marker target, $h$-value target, input-copy target, tag target).
Prepare the query in the uniform superposition of the states $\ket{a_x}$,
initialize the $h$-value target to $\ket{0^n}$, and initialize the marker,
input-copy, and tag targets to $\ket +$, $\ket +^{\otimes n}$, and
$\ket +^{\otimes n}$, respectively.  After one XOR query, omit the target
fields that remain invariant and identify the query label $\ket{a_x}$ with
its address $\ket x$.  The relevant state is
\[
   \frac1{\sqrt N}\sum_{x\in X}\ket x\ket{h(x)}.
\]
As in Proposition~\ref{prop:xor-upper}, repeating the usual Simon sampling procedure $n+2$ times distinguishes the
two cases with error less than $1/4$: in the Simon case every sample lies in
$s^\perp$, whereas in the permutation case the samples are independent and
uniform in $X$, and a union bound gives
\(
  \Pr[\text{failure to span }X]<\frac14.
\)
A matching $O(\sqrt N)$ forward in-place upper bound follows by querying a
uniformly random set of $k=\Theta(\sqrt N)$ distinct special inputs and
looking for a repeated $h$-value; see Proposition~\ref{prop:erase-upper}. 

\paragraph{Proof outline.}
The lower bound first couples a uniform completion to a uniform permutation
independent of $(h,r)$.  It then represents each in-place query by two XOR
queries and applies a hybrid over their sparse random disagreement set.  The
resulting one-slot comparison map admits a recording contraction, so
Theorem~\ref{thm:recording-to-birthday} applies.  Every random object is
sampled once and remains fixed throughout the computation.

\subsection{Reverse coupling}
\label{subsec:reverse-coupling}

For each $i\in\{0,1\}$, the real hard distribution samples
$h\sim\Hcal_i$, a uniformly random tag table $r$, and then a uniform
completion $P$ satisfying~\eqref{eq:permutation-prescription}.  These three
objects remain fixed throughout the computation.  We call the resulting
execution the \emph{real experiment}.  The distribution of $P$ on background
inputs depends on the hidden set $\mathcal F$.  The following construction
couples $P$ to a uniform permutation independent of the hidden instance.

Let $\Omega^{(N)}$ be the set of ordered $N$-tuples
$W=(w_x)_{x\in X}$ of distinct elements of $\Omega$, and write
$\mathcal W=\{w_x\mid x\in X\}$.
For every $W\in\Omega^{(N)}$, fix a deterministic permutation $\tau_W$ such
that
\begin{equation}
   \tau_W(a_x)=w_x\quad(x\in X),
   \qquad
   \supp(\tau_W)\subseteq\mathcal A\cup\mathcal W.
   \label{eq:tau-W-properties}
\end{equation}
The sets $\mathcal A$ and $\mathcal W$ may overlap, so the assignments
$a_x\mapsto w_x$ cannot in general be implemented by independent pairwise
swaps.  A suitable permutation nevertheless always exists.  The directed
edges $a_x\to w_x$
decompose into vertex-disjoint directed cycles and directed paths.  Retain
the cycle edges, close each nontrivial path by mapping its final vertex to
its initial vertex, and fix every point outside
$\mathcal A\cup\mathcal W$.  We fix one such permutation $\tau_W$ for each
$W$.

The purpose of $\tau_W$ is to relocate the inputs at which the outputs are
prescribed.  If $\pi=P\circ\tau_W^{-1}$, then
\[
   \pi(w_x)=P(a_x)=F_x.
\]
Thus the prescribed outputs are attached to the uniformly random ordered
tuple $W$ rather than to the fixed set $\mathcal A$.  The next lemma shows
that this random relocation makes $\pi$ uniform without changing the
conditional distribution required for $P$.

\begin{lemma}[Reverse coupling]
\label{lem:reverse-coupling}
Fix a family $F=(F_x)_{x\in X}$ arising from some pair $(h,r)$.  Sample $P$ uniformly from the
permutations satisfying~\eqref{eq:permutation-prescription}, and
independently sample $W$ uniformly from $\Omega^{(N)}$.  Define
\[
   \pi=P\circ\tau_W^{-1}.
\]
For every such fixed $F$ and every permutation $\pi_0$ of $\Omega$,
$\Pr[\pi=\pi_0\mid F]=1/L!$.  Consequently, $\pi$ is uniform and, when $F$
is generated from $(h,r)$, independent of $(h,r)$.  We call $\pi$ the
\emph{reference permutation}.
\end{lemma}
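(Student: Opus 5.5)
Fix $F$ and a permutation $\pi_0$ of $\Omega$.  The plan is to count the pairs $(P,W)$ with $P\circ\tau_W^{-1}=\pi_0$ and show that the count does not depend on $\pi_0$.  The event $\pi=\pi_0$ is the same as $P=\pi_0\circ\tau_W$.  So for each $W\in\Omega^{(N)}$ there is exactly one candidate $P$, namely $P_W:=\pi_0\circ\tau_W$, and this candidate is a permutation of $\Omega$.  It is admissible precisely when it satisfies~\eqref{eq:permutation-prescription}.  By~\eqref{eq:tau-W-properties} we have $P_W(a_x)=\pi_0(\tau_W(a_x))=\pi_0(w_x)$.  Admissibility is therefore equivalent to $\pi_0(w_x)=F_x$ for every $x$, that is, $w_x=\pi_0^{-1}(F_x)$.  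Exactly one ordered tuple meets this, $W^\ast=(\pi_0^{-1}(F_x))_{x\in X}$.  It has distinct entries because the $F_x$ are distinct and $\pi_0^{-1}$ is a bijection, so $W^\ast\in\Omega^{(N)}$.  Only the first property in~\eqref{eq:tau-W-properties} is used here; the support condition matters only later, for sparsity.

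The probability is then immediate.  Since $P$ and $W$ are independent,
\[
  \Pr[\pi=\pi_0\mid F]=\Pr[W=W^\ast]\cdot\Pr[P=P_{W^\ast}]
  =\frac{1}{|\Omega^{(N)}|}\cdot\frac{1}{(L-N)!}.
\]
Here $|\Omega^{(N)}|=L!/(L-N)!$, so the product equals $1/L!$, independently of $\pi_0$ and of $F$.

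For the consequences, this conditional law is uniform for every admissible $F$.  When $F$ is generated from $(h,r)$, the conditional law of $\pi$ given $(h,r)$ depends on $(h,r)$ only through $F$, and it is the same uniform law for every value.  Hence $\pi$ is uniform and independent of $(h,r)$.  This step implicitly uses that $W$ is drawn independently of $(h,r)$, and that $P$ given $(h,r)$ is uniform among completions and hence depends only on $F$.

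There is no real obstacle.  The one point needing care is that uniqueness of $P$ given $(W,\pi_0)$ is just composition by a fixed permutation, while uniqueness of the admissible $W$ comes from injectivity of $x\mapsto F_x$.  The calculation is bookkeeping with a uniform law on a product set.
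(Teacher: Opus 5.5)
Your proposal is correct and follows the paper's proof essentially step for step. Both arguments show that $(P,W)=(\pi_0\circ\tau_{W^\ast},W^\ast)$ with $w^\ast_x=\pi_0^{-1}(F_x)$ is the unique pair producing $\pi_0$, and both then multiply $1/(L-N)!$ by $1/(L)_N$ to obtain $1/L!$, a value independent of $F$.
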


\begin{proof}
Fix a permutation $\pi_0:\Omega\to\Omega$.  If
$P\circ\tau_W^{-1}=\pi_0$, then $P=\pi_0\circ\tau_W$.  Evaluating this
identity at $a_x$ gives
\[
   F_x=P(a_x)=\pi_0(w_x),
\]
so necessarily
\[
   w_x=\pi_0^{-1}(F_x)\quad(x\in X),
   \qquad
   P=\pi_0\circ\tau_W.
\]
Conversely, these choices are valid: the points $w_x$ are distinct because
the $F_x$ are distinct, and
\[
   P(a_x)=\pi_0(\tau_W(a_x))=\pi_0(w_x)=F_x.
\]
They also satisfy $P\circ\tau_W^{-1}=\pi_0$.  Hence exactly one pair $(P,W)$
in the support of the sampling experiment produces each $\pi_0$.  There are
$(L-N)!$ completions and
\[
   |\Omega^{(N)}|=(L)_N=\frac{L!}{(L-N)!}
\]
ordered distinct tuples.  Therefore
\[
   \Pr[\pi=\pi_0\mid F]
      =\frac1{(L-N)!}\frac1{(L)_N}
      =\frac1{L!},
\]
which is independent of $F$.
\end{proof}

Equivalently, $P=\pi\circ\tau_W$, where $\pi$ is uniform and $\tau_W$ is
supported on at most $2N$ points.  This identity is used only in the
analysis: the algorithm continues to query $P$, while $W$ and $\pi$ remain
fixed throughout the coupled execution.  The signed hybrid below converts
the sparse modification into a bound on the final states.

\subsection{Signed simulation}
\label{subsec:signed-simulation}

The two functions compared below are fixed throughout an execution.  Their
disagreement set depends on the random coupling, and every fixed signed input
belongs to it with probability at most $N/L$.

For a permutation $P$, define the signed function
\[
   f_P(+,w)=P(w),
   \qquad
   f_P(-,w)=P^{-1}(w).
\]
We identify the two signs with one bit.  Since
$\Omega\cong\{0,1\}^{3n+1}$, the signed input $(\xi,w)$ is thereby encoded
as an element of $\{0,1\}^{3n+2}$, as required by the standard XOR model.
The sign $\xi\in\{+,-\}$ is an input of the XOR oracle used in this
simulation.  The ordered registers are the sign $\Xi$, the $\Omega$-valued
query register $Y$, and an $\Omega$-valued cleanup register $G$ that serves as the XOR
target.  Thus
\[
 O_{f_P}\ket{\xi,w,z}
   =\ket{\xi,w,z\oplus f_P(\xi,w)}.
\]
Sparse-disagreement hybrids are most naturally stated for standard XOR
oracles.  We therefore represent one forward in-place call by two XOR calls:
the first computes $P(w)$, the swap makes it the new query value, and the
second computes $P^{-1}(P(w))=w$ and removes the old input from $G$.  The sign
is fixed to $+$ for the first call and to $-$ for the second, so $\Xi$ is
suppressed.  The exact representation is
\begin{equation}
   \ket{w,0}_{Y,G}
      \xrightarrow{\ f_P(+,\cdot)\ }
   \ket{w,P(w)}
      \xrightarrow{\ \mathrm{SWAP}\ }
   \ket{P(w),w}
      \xrightarrow{\ f_P(-,\cdot)\ }
   \ket{P(w),0}_{Y,G}.
   \label{eq:signed-simulation}
\end{equation}
The lower-bound analysis applies this identity to each supplied forward
call; the algorithm's oracle access remains $U_P$.  Hence a $T$-query
forward strategy has an exact representation using $q=2T$ signed XOR
queries.

Under the reverse coupling of Lemma~\ref{lem:reverse-coupling}, define the
comparison signed function
\begin{equation}
\begin{aligned}
   g(+,w)&=
   \begin{cases}
      F_x,&w=a_x,\\
      \pi(w),&w\notin\mathcal A,
   \end{cases}
   \\
   g(-,y)&=
   \begin{cases}
      a_x,&y=F_x,\\
      \pi^{-1}(y),&y\notin\mathcal F.
   \end{cases}
\end{aligned}
\label{eq:ideal-signed-function}
\end{equation}
Although the two directions in $g$ need not be inverses, each is a classical
function and therefore defines a unitary XOR transformation.
We call the computation that uses $g$ in both signed calls of every query
slot the \emph{ideal experiment}.

Conditioning on $P$ fixes the real oracle and every real pre-query state,
while $W$ remains independent and uniform.  The next lemma bounds the
marginal probability that a fixed signed input lies in the resulting
disagreement set.

\begin{lemma}[Sparse disagreement]
\label{lem:sparse-disagreement}
Fix $F$ and use the reverse coupling of
Lemma~\ref{lem:reverse-coupling}.  Condition on the completion $P$, and let
$\Gamma_W\subseteq\{+,-\}\times\Omega$ be the set on which $f_P$ and $g$ differ.
Because $W$ was sampled independently of $P$, it remains uniform in
$\Omega^{(N)}$ after this conditioning.  Then
\begin{equation}
   \max_{\xi\in\{+,-\},\ w\in\Omega}\,
      \Pr_W[(\xi,w)\in\Gamma_W\mid F,P]
      \leq \frac NL.
   \label{eq:sparse-disagreement-bound}
\end{equation}
\end{lemma}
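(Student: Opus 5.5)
The plan is to compute the disagreement set $\Gamma_W$ exactly in terms of $\supp(\tau_W)$, using the identity $P=\pi\circ\tau_W$ from Lemma~\ref{lem:reverse-coupling}. Once $\Gamma_W$ is written that way, the bound reduces to the probability that a single fixed point of $\Omega$ lands in the random set $\mathcal W$. That probability is exactly $N/L$.

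\emph{Forward sign.} First I treat the forward sign. If $w=a_x$, then $g(+,a_x)=F_x=P(a_x)=f_P(+,a_x)$, so special inputs never disagree. If $w\notin\mathcal A$, then $g(+,w)=\pi(w)=P(\tau_W^{-1}(w))$, whereas $f_P(+,w)=P(w)$. Since $P$ is injective, these differ exactly when $\tau_W^{-1}(w)\neq w$, that is, when $w\in\supp(\tau_W)$. By \eqref{eq:tau-W-properties}, $\supp(\tau_W)\subseteq\mathcal A\cup\mathcal W$, and $w\notin\mathcal A$. Hence $(+,w)\in\Gamma_W$ implies $w\in\mathcal W$.

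\emph{Backward sign.} Next I treat the backward sign in the same way. If $y=F_x$, then $g(-,F_x)=a_x=P^{-1}(F_x)=f_P(-,F_x)$, so there is no disagreement. If $y\notin\mathcal F$, then $g(-,y)=\pi^{-1}(y)=\tau_W(P^{-1}(y))$, whereas $f_P(-,y)=P^{-1}(y)$. These differ exactly when $P^{-1}(y)\in\supp(\tau_W)\subseteq\mathcal A\cup\mathcal W$. Because $y\notin\mathcal F=P(\mathcal A)$, the point $P^{-1}(y)$ lies outside $\mathcal A$. Hence $(-,y)\in\Gamma_W$ implies $P^{-1}(y)\in\mathcal W$.

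\emph{Probability bound.} After conditioning on $F$ and $P$, the point $v=w$ (for the forward sign) or $v=P^{-1}(y)$ (for the backward sign) is a fixed element of $\Omega$. The tuple $W$ is still uniform on $\Omega^{(N)}$, because it was sampled independently of $P$. By symmetry, each coordinate $w_x$ is uniform on $\Omega$, and the coordinates are distinct, so
\[
  \Pr_W[v\in\mathcal W]=\sum_{x\in X}\Pr[w_x=v]=\frac NL.
\]
This gives \eqref{eq:sparse-disagreement-bound}.

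The only delicate point is the order of conditioning: the bound must be taken after fixing $P$, so that $P^{-1}(y)$ is a fixed point, while $W$ remains uniform. The coupling is designed to guarantee exactly this. The case analysis itself is routine.
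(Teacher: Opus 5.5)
Your proposal is correct and follows the paper's proof essentially step for step. It uses the same two-sign case analysis, reducing disagreement to $w\in\supp(\tau_W)\setminus\mathcal A$ in the forward case and to $P^{-1}(y)\in\supp(\tau_W)\setminus\mathcal A$ in the backward case, and hence to membership of a fixed point in $\mathcal W$. Your disjoint-events computation $\Pr_W[v\in\mathcal W]=\sum_{x\in X}\Pr[w_x=v]=N/L$ just writes out the step that the paper asserts directly.
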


\begin{proof}
On a forward input $a_x$, both functions output $F_x$.  If
$w\notin\mathcal A$, then
\[
   f_P(+,w)=P(w),
   \qquad
   g(+,w)=\pi(w)=P(\tau_W^{-1}(w)).
\]
Since $P$ is injective, disagreement implies that $\tau_W^{-1}(w)\neq w$.
A permutation and its inverse have the same support, so
\eqref{eq:tau-W-properties} and $w\notin\mathcal A$ imply
$w\in\mathcal W\setminus\mathcal A$.

On an inverse input $F_x$, both functions output $a_x$.  If
$y\notin\mathcal F$, then
\[
   f_P(-,y)=P^{-1}(y),
   \qquad
   g(-,y)=\pi^{-1}(y)
      =\tau_W(P^{-1}(y)).
\]
Disagreement implies $P^{-1}(y)\in\supp(\tau_W)$.  It cannot lie in
$\mathcal A$, because $y\notin\mathcal F=P(\mathcal A)$.  Hence
$P^{-1}(y)\in\mathcal W\setminus\mathcal A$.

After $P$ is fixed, both $w$ and $P^{-1}(y)$ in the preceding two cases are
fixed points of $\Omega$.  Every fixed point belongs to the underlying set
of a uniform ordered distinct $N$-tuple with probability $N/L$.  The
disagreement probability is zero on $(+,a_x)$ and $(-,F_x)$ and at most
$N/L$ on every other signed input, proving
\eqref{eq:sparse-disagreement-bound}.
\end{proof}

The next lemma permits arbitrary replacement values on a random disagreement
set; only its pointwise marginal matters.

\begin{lemma}[Sparse-disagreement hybrid]
\label{lem:sparse-disagreement-hybrid}
Let $f:\{0,1\}^k\to\{0,1\}^{\ell}$ be fixed, and let a random seed $\omega$
determine a function $f_\omega:\{0,1\}^k\to\{0,1\}^{\ell}$ and a set
$\Gamma_\omega\subseteq\{0,1\}^k$ such that
$f_\omega(z)=f(z)$ for $z\notin\Gamma_\omega$.  Suppose
\[
   \max_{z\in\{0,1\}^k}\Pr_\omega[z\in\Gamma_\omega]\leq\eta.
\]
For a fixed algorithm making $q$ standard XOR queries, let $\rho_f$ be its
final state with oracle $f$, let
$\rho_{f_\omega}$ be its final state with oracle $f_\omega$, and set
$\bar\rho=\mathbb E_\omega\rho_{f_\omega}$.  Then
\begin{equation}
   \|\rho_f-\bar\rho\|_1\leq4q\sqrt\eta.
   \label{eq:sparse-disagreement-hybrid-bound}
\end{equation}
\end{lemma}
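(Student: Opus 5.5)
The plan is the standard BBBV-style hybrid argument with a random disagreement set. Purify the algorithm, so that it is a sequence of oracle-independent isometries $U_0,\dots,U_q$ interleaved with query unitaries. Let $\ket{\psi_t}$ denote the state just before the $t$-th query in the run with oracle $f$, for $t=1,\dots,q$. These states are fixed, since $f$ is fixed. For a fixed seed $\omega$, let $\ket{\Psi_f}$ and $\ket{\Psi_{\omega}}$ be the final pure states with oracles $f$ and $f_\omega$. The telescoping hybrid replaces the queries to $f$ one at a time by queries to $f_\omega$, starting from the last. Because all subsequent maps are unitaries or isometries, this gives
\[
  \bigl\|\ket{\Psi_f}-\ket{\Psi_\omega}\bigr\|
  \le \sum_{t=1}^{q}\bigl\|(O_f-O_{f_\omega})\ket{\psi_t}\bigr\|.
\]
The essential point is that every term is evaluated on the $f$-run states $\ket{\psi_t}$, which do not depend on $\omega$.

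Next, bound each term. Write $\ket{\psi_t}=\sum_z\ket z\ket{\alpha_{t,z}}$, with $z$ the query-input register. The operators $O_f$ and $O_{f_\omega}$ agree on basis inputs $z\notin\Gamma_\omega$. On $z\in\Gamma_\omega$ their difference has norm at most $2$, and distinct $z$ give orthogonal ranges. Hence
\[
  \bigl\|(O_f-O_{f_\omega})\ket{\psi_t}\bigr\|^2
  \le 4\sum_{z\in\Gamma_\omega}\|\alpha_{t,z}\|^2 .
\]
Take the expectation over $\omega$ and apply Jensen's inequality (concavity of the square root). Together with $\sum_z\|\alpha_{t,z}\|^2=1$ and $\Pr_\omega[z\in\Gamma_\omega]\le\eta$, this yields
\[
  \mathbb E_\omega\bigl\|(O_f-O_{f_\omega})\ket{\psi_t}\bigr\|
  \le 2\sqrt{\textstyle\sum_z\Pr_\omega[z\in\Gamma_\omega]\,\|\alpha_{t,z}\|^2}
  \le 2\sqrt\eta .
\]
Summing over $t$ gives $\mathbb E_\omega\|\ket{\Psi_f}-\ket{\Psi_\omega}\|\le 2q\sqrt\eta$.

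Finally, convert to trace norm. Both final vectors are unit vectors, so~\eqref{eq:pure-to-trace} gives
\[
  \bigl\|\ket{\Psi_f}\!\bra{\Psi_f}-\ket{\Psi_\omega}\!\bra{\Psi_\omega}\bigr\|_1
  \le 2\bigl\|\ket{\Psi_f}-\ket{\Psi_\omega}\bigr\|.
\]
Partial trace to the retained system does not increase the trace norm. Averaging over $\omega$ with the convexity of the trace norm then gives
\[
  \|\rho_f-\bar\rho\|_1\le\mathbb E_\omega\|\rho_f-\rho_{f_\omega}\|_1\le 4q\sqrt\eta .
\]

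The only delicate point is the order of the hybrid. The sum must be expressed using the $\omega$-independent $f$-run states $\ket{\psi_t}$ rather than the $f_\omega$-run states. Otherwise the query weights would be correlated with $\Gamma_\omega$, and the pointwise marginal bound $\eta$ could not be applied. Replacing queries from the last to the first ensures exactly this ordering.
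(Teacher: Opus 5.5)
Your proposal is correct and follows essentially the same route as the paper's proof. Both use a telescoping hybrid in which each step is evaluated on the $\omega$-independent pre-query state of the $f$-run, bound that step by $2\|\Pi_{\Gamma_\omega}\psi_j\|$, and apply Jensen to obtain $2\sqrt\eta$ per query. The factor of $2$ from the pure-to-trace inequality, together with convexity of the trace norm, then gives $4q\sqrt\eta$.
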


\begin{proof}
Purify the algorithm.  For $j=0,\ldots,q$, define a hybrid in which the first
$j$ oracle calls use $f$ and the remaining $q-j$ calls use $f_\omega$.  Compare
the hybrids with $j-1$ and $j$ base-oracle calls.  Immediately before their
$j$th query, both have used the fixed oracle $f$ for their first $j-1$ calls.
Their common pre-query state $\ket{\psi_j}$ is therefore independent of $\omega$.

Let $\Pi_{\Gamma_\omega}$ project the query-input register onto
$\Gamma_\omega$.  Since the two XOR oracles agree outside
$\Gamma_\omega$ and are both unitary, their action changes
$\ket{\psi_j}$ by Euclidean norm at most
\[
   2\|\Pi_{\Gamma_\omega}\psi_j\|.
\]
Write $\ket{\psi_j}=\sum_z\ket z\ket{\phi_z}$.  Independence gives
\[
   \mathbb E_\omega\|\Pi_{\Gamma_\omega}\psi_j\|^2
      =\sum_z\Pr_\omega[z\in\Gamma_\omega]\|\phi_z\|^2
      \leq\eta.
\]
By Jensen's inequality, the expected contribution of one hybrid step is at
most $2\sqrt\eta$.  A telescoping sum therefore bounds the expected
Euclidean distance between the two final pure states by
$2q\sqrt\eta$.

For normalized vectors $\ket\phi,\ket\psi$,
\[
   \|\ket\phi\!\bra\phi-\ket\psi\!\bra\psi\|_1
      \leq2\|\phi-\psi\|.
\]
Taking expectations and using convexity of the trace norm proves
\eqref{eq:sparse-disagreement-hybrid-bound}.

\end{proof}

\begin{corollary}[Completion-to-ideal reduction]
\label{cor:completion-to-ideal}
Fix $h$ and $r$, and consider a $T$-query forward in-place strategy.  In the
signed representation, let $G_j$ be the fresh cleanup register used for
query slot $j$, and retain all of these registers in
\[
   \mathsf S_T=\mathsf{Alg}\otimes G_1\otimes\cdots\otimes G_T.
\]
After query slot $j$, extend every later inter-query operation by the identity
on $G_j$.
Let $\rho^{\mathrm{real}}_{h,r}$ be the final state of the original strategy,
averaged over a uniform completion $P$ and augmented by
$\ket0\!\bra0_{G_1}\otimes\cdots\otimes\ket0\!\bra0_{G_T}$.  Let
$\rho^{\mathrm{ideal}}_{h,r}$ be the final state obtained by replacing the
signed function $f_P$ by $g$ in every simulated query, retaining every
$G_j$, and averaging over the coupled pair $(P,W)$ from
Lemma~\ref{lem:reverse-coupling}.  Both states act on $\mathsf S_T$, and
\[
   \|\rho^{\mathrm{real}}_{h,r}
      -\rho^{\mathrm{ideal}}_{h,r}\|_1
      \leq 8T\sqrt{\frac NL}.
\]
\end{corollary}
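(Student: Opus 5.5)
The plan is to reduce the corollary to Lemma~\ref{lem:sparse-disagreement-hybrid}, conditioning on the completion $P$ and using Lemma~\ref{lem:sparse-disagreement} to supply $\eta=N/L$.

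\emph{Step 1: exact signed representation.} Fix $h$, $r$ and a completion $P$. By the identity \eqref{eq:signed-simulation}, each forward call $U_P$ equals the following three operations, with $G_j$ initialized to $\ket0$: the XOR call $O_{f_P}$ with sign $+$, then SWAP of $Y$ and $G_j$, then $O_{f_P}$ with sign $-$. On the valid input subspace the final contents of $G_j$ are exactly $\ket0$. Thus the signed $q=2T$-query strategy, with all $G_j$ retained and the fresh $G_j$ prepared by the oracle-independent inter-query isometries, outputs exactly $\rho^{\mathrm{real}}_{h,r,P}\otimes\ket0\!\bra0^{\otimes T}$. The signs are fixed per call, so they can be hard-wired; we still view the oracle as the single function $f_P$ on $\{+,-\}\times\Omega$, queried $2T$ times. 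Every intermediate map is then a fixed isometry independent of the oracle. This is exactly the setting of Lemma~\ref{lem:sparse-disagreement-hybrid}, with $\mathsf S_T$ the retained system.

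\emph{Step 2: apply the hybrid conditionally.} Condition on $P$ and take the fixed base function to be $f=f_P$. Take the random seed to be $\omega=W$, which is still uniform on $\Omega^{(N)}$ given $P$. Set $f_\omega=g$, which is determined by $(P,W)$ through $\pi=P\circ\tau_W^{-1}$, and set $\Gamma_\omega=\Gamma_W$. Lemma~\ref{lem:sparse-disagreement} gives $\max_z\Pr_W[z\in\Gamma_W\mid F,P]\leq N/L$. Lemma~\ref{lem:sparse-disagreement-hybrid} with $q=2T$ then yields
\[
\bigl\|\rho_{f_P}-\mathbb E_W\rho_{g}\bigr\|_1\leq 4\cdot2T\sqrt{N/L}=8T\sqrt{N/L}.
\]
The only point needing care here is that the hybrid lemma's base oracle must be fixed while the seed is random. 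Conditioning on $P$ arranges exactly this. Independence of $W$ from $P$ is what keeps each pre-query state $\ket{\psi_j}$ independent of $\omega$.

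\emph{Step 3: average over $P$.} Averaging $\rho_{f_P}$ over a uniform completion gives $\rho^{\mathrm{real}}_{h,r}$, augmented by the zero cleanup registers. Averaging $\mathbb E_W\rho_g$ over $P$ gives the average over the coupled pair $(P,W)$, which is $\rho^{\mathrm{ideal}}_{h,r}$. Convexity of the trace norm preserves the bound $8T\sqrt{N/L}$.

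The main obstacle is Step~1's bookkeeping: the two states must live on the same space $\mathsf S_T$, and in the ideal experiment $g(-,\cdot)$ need not invert $g(+,\cdot)$, so $G_j$ need not return to $\ket0$. Retaining all $G_j$ for both computations, rather than discarding them, resolves this. It means the hybrid lemma is applied to one fixed purified circuit with two oracle choices, and no cleanup assumption is required.
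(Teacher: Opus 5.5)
Your proof is correct and follows the paper's argument essentially line for line: you condition on $P$ so that $W$ stays uniform, take $\eta=N/L$ from Lemma~\ref{lem:sparse-disagreement}, apply Lemma~\ref{lem:sparse-disagreement-hybrid} with $q=2T$ to the exact signed representation, and average over $P$ using convexity. Your extra bookkeeping about retaining every $G_j$ in both computations makes explicit what the statement already builds into the definition of $\mathsf S_T$.
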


\begin{proof}
Fixing $h$ and $r$ fixes $F$.  Condition on $P$.  By
the sampling rule in Lemma~\ref{lem:reverse-coupling}, $W$ remains uniform
in $\Omega^{(N)}$.  Lemma~\ref{lem:sparse-disagreement} therefore
gives pointwise disagreement marginal at most $\eta=N/L$.  The exact
representation~\eqref{eq:signed-simulation} uses $q=2T$ signed XOR calls.
Lemma~\ref{lem:sparse-disagreement-hybrid} gives
\[
   4q\sqrt\eta=8T\sqrt{\frac NL}.
\]
The estimate is uniform in $P$.  Averaging over the uniform-completion
marginal and using convexity therefore preserves the bound.
\end{proof}

\subsection{Query macro}
\label{subsec:query-macro}

Let $Y$ denote the $\Omega$-valued algorithm query register, both before and
after a query.  Each query slot uses a fresh $\Omega$-valued
cleanup register $G$, initially in $\ket0$.  It temporarily stores the old
query value between the two signed calls.  For $f_P$, the second call returns
$G$ to $\ket0$; for the comparison function $g$, it may leave a nonzero
value, which is retained as part of the output.

We call the three-step composition consisting of the positive signed call,
the swap, and the negative signed call a \emph{query macro}.  When both calls
use $g$, the composition is the \emph{ideal query macro} $M$.  It maps $Y$ to
the ordered output $(Y,G)$.  The original algorithm continues on $Y$ and its
private workspace; later
operations act as the identity on this slot's $G$.  Retaining $G$ makes the
comparison map isometric, and discarding all cleanup registers at the end
can only decrease trace distance.

Combining the two comparison signed calls and the intervening swap in
\eqref{eq:signed-simulation} gives the following forward linear map:
\[
   M\ket w_Y
      =\ket{g(+,w)}_Y
         \ket{w\oplus g(-,g(+,w))}_G.
\]
Substituting~\eqref{eq:ideal-signed-function} gives
\begin{equation}
M\ket w=
\begin{cases}
   \ket{F_x}\ket0_G,
      &w=a_x,\\[1mm]
   \ket{F_x}\ket{w\oplus a_x}_G,
      &w\notin\mathcal A\text{ and }\pi(w)=F_x,\\[1mm]
   
      \ket{\pi(w)}\ket0_G,
      &w\notin\mathcal A\text{ and }\pi(w)\notin\mathcal F.
\end{cases}
\label{eq:ideal-macro}
\end{equation}
In the first line, the main output $Y$ reproduces the embedded injective
instance on a special input.  In the third, a background input is mapped by
the reference permutation $\pi$, whose distribution is uniform and
independent of $(h,r)$.  The second line is the only
place where a background query depends on the prescribed family
$F=(F_x)_{x\in X}$, and hence on $(h,r)$.  This dependence is forced
by the attempted inverse cleanup.  If a background input $w$ and a
special input $a_x$ have the same main-output value $F_x$, then
$w\oplus a_x\neq0$ in $G$ makes their complete $(Y,G)$ outputs orthogonal.
Thus $G$ retains the residual information that makes the map an isometry.

\begin{lemma}[Isometry of the ideal macro]
The map $M$ in~\eqref{eq:ideal-macro} is an isometry.
\end{lemma}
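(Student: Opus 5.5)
Since $M$ sends computational-basis states to computational-basis states of $Y\otimes G$, the plan is to show that $M$ is injective on the basis $\{\ket w\mid w\in\Omega\}$ of $Y$. Then the images are distinct basis vectors, hence orthonormal, and $M$ is an isometry. Concretely, we take two distinct inputs $w\neq w'$ and show that their output labels $(Y,G)$ differ, by splitting into the three cases of~\eqref{eq:ideal-macro} according to the main output $Y$.

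First we check that the three cases exhaust $\Omega$. For $w\notin\mathcal A$, the value $\pi(w)$ either lies in $\mathcal F$, in which case it equals $F_x$ for a unique $x$ because the $F_x$ are distinct, or it does not. Next we sort inputs by whether the $Y$-output lies in $\mathcal F$.

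Inputs in the third case have outputs $\pi(w)\notin\mathcal F$. Since $\pi$ is a permutation, two distinct such inputs give distinct $Y$-values. These outputs also cannot coincide with outputs from the first two cases, whose $Y$-values lie in $\mathcal F$.

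Inputs in the first two cases have $Y$-output $F_x\in\mathcal F$. If two such inputs give $F_x\neq F_{x'}$, their outputs differ already in $Y$. So suppose both give the same $F_x$. At most one input from case one gives $F_x$, namely $a_x$, and it produces $G=0$. At most one input from case two gives $F_x$, namely $\pi^{-1}(F_x)$ if it lies outside $\mathcal A$, since $\pi$ is injective. That input produces $G=w\oplus a_x$, which is nonzero because $w\notin\mathcal A$ forces $w\neq a_x$. Thus the two $G$-values differ, and the outputs are orthogonal.

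The only delicate step is this collision between a special input and a background input sharing the main output $F_x$, and the nonzero $G$-register resolves it. I expect no real obstacle beyond stating clearly that $M$ is a linear extension of an injective map on basis labels.
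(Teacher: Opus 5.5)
Your proposal is correct and follows the paper's proof: both show that distinct basis inputs have distinct, hence orthogonal, basis outputs, and both resolve the one nontrivial collision, a special input $a_x$ against a background $w$ with $\pi(w)=F_x$, by the nonzero $G$-label $w\oplus a_x$. You sort the cases by whether the $Y$-output lies in $\mathcal F$, while the paper sorts them by input type; this is only a reshuffling of the same argument.
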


\begin{proof}
We show that distinct standard-basis inputs have orthogonal outputs.  If
$w,w'\notin\mathcal A$ and $w\neq w'$, their main-output values in $Y$ are
the distinct strings $\pi(w)$ and $\pi(w')$.  If $a_x\neq a_{x'}$ are
special inputs, their
main-output values $F_x$ and $F_{x'}$ are distinct because those strings contain $x$ and
$x'$, respectively.

It remains to compare a special input $a_x$ with a background input
$w\notin\mathcal A$.  If $\pi(w)\neq F_x$, their $Y$-values are different.  If
$\pi(w)=F_x$, their outputs are
\[
   \ket{F_x}\ket0_G
   \quad\text{and}\quad
   \ket{F_x}\ket{w\oplus a_x}_G.
\]
Since $w\neq a_x$, the two $G$-labels are distinct.  Thus all images of
standard-basis inputs are orthonormal, and $M^\dagger M=\Id$.
\end{proof}

By Lemma~\ref{lem:reverse-coupling}, $\pi$ is uniform and independent of
$(h,r)$.  Nevertheless, the condition
$\pi(w)\in\mathcal F$ in the third line of~\eqref{eq:ideal-macro} depends on the
pair $(h,r)$ through $F$.  Thus Corollary~\ref{cor:injective-from-recording} is not
directly applicable: it concerns the injective query interface, which has no
background inputs.  We next construct a nearby contraction for the full
ideal macro and verify the hypotheses of
Theorem~\ref{thm:recording-to-birthday}.

\subsection{Recording contraction}
\label{subsec:recording-contraction}

Fix $h$ and $\pi$.  Purify the uniformly random fixed tag table using the registers $R_i$ and
the state $\ket U$ from Section~\ref{sec:recording-theorem}.  For a basis
table $\ket r$, the ideal macro uses the same $r$ on every call and preserves
that basis label.  With input order $(Y,\mathcal R)$ and output order
$(Y,G,\mathcal R)$, the coherently extended map acts as
\[
 M\ket w\ket r=
 \begin{cases}
  \ket{1,h(i),i,r_i}_Y\ket0_G\ket r,
     &w=a_i,\\[1mm]
  \ket{\pi(w)}\ket{w\oplus a_i}_G\ket r,
     &w\notin\mathcal A\text{ and }
       \pi(w)=(1,h(i),i,r_i)\text{ for some }i,\\[1mm]
  \ket{\pi(w)}\ket0_G\ket r,
     &\text{otherwise.}
\end{cases}
\]
For each $i$, the projectors $\Pi_{0,i}=\ket u\!\bra u$ and
$\Pi_{1,i}=\Id-\Pi_{0,i}$ act on $R_i$ and select its unrecorded and recorded
subspaces, respectively.
The algorithm workspace is an unchanged tensor factor.  Thus
$M$ does not overwrite a computational-basis table label, but it is
coherently controlled by that label and may entangle $\mathcal R$ with
$(Y,G)$.  Basiswise preservation of $\ket r$ therefore does not imply that
$M$ preserves the decomposition
$\operatorname{im}\Pi_{0,i}\oplus\operatorname{im}\Pi_{1,i}$ on a superposition
of table values.  This distinction will explain the post-projection in
\eqref{eq:Mtilde-candidate}.

Let
\[
   \Omega_1
     =\{(1,v,i,t)\mid v,i,t\in X\}.
\]
$\Omega_1$ is the subset of $\Omega$ whose marker bit is one.  Every
$y\in\Omega_1$ has a unique parsing $y=(1,v,i,t)$, and
\[
   y=F_i
   \quad\Longleftrightarrow\quad
   v=h(i)\text{ and }r_i=t.
\]

Let $w\notin\mathcal A$ and set $y=\pi(w)$.  We call the branch labelled by
$w$ a \emph{candidate branch for index $i$} if
\[
   y=(1,h(i),i,t)
   \quad\text{for some }t\in X.
\]
The index $i$ is parsed from the main
output $y$, not from $w$: a general background input carries no designated
address.  For every background branch, define the table-independent baseline
isometry
\begin{equation}
   M_w^{\mathrm{base}}\ket\psi_{\mathcal R}
      =\ket y_Y\ket0_G\ket\psi_{\mathcal R}.
   \label{eq:candidate-baseline}
\end{equation}
This is the action of the macro when no equality $y=F_i$ is detected.

The ideal macro tests whether $y\in\mathcal F$.  Once $y=(1,v,i,t)$ is
parsed, membership is the conjunction of the equalities $v=h(i)$ and
$r_i=t$.  If $v\neq h(i)$, no value of
$r_i$ can make $y=F_i$, and $M_w=M_w^{\mathrm{base}}$ for every table.  If
$v=h(i)$, membership reduces to the coherent
equality test $r_i=t$.  This is the candidate case modified below.  We will
define $\widetilde M$ so that its action on
$\operatorname{im}\Pi_{0,i}$ is equal to the baseline in both cases, so the
unrecorded component cannot acquire dependence on $h(i)$ or $r_i$.

We define a linear map $\widetilde M$ separately on each
standard-basis query-input branch; it is used only in the lower-bound hybrid.
Formally, let $\iota_w\ket\psi=\ket w_Y\ket\psi_{\mathcal R}$ and set
$M_w=M\iota_w$.  We now specify maps $\widetilde M_w$ from the table system to
the ordered output $(Y,G,\mathcal R)$.  Since the states $\ket w_Y$ form a
basis, the relations $\widetilde M\iota_w=\widetilde M_w$ uniquely determine
$\widetilde M$.  All other registers are spectators.

\begin{enumerate}
\item If the input is the special input $a_i$, set
\begin{equation}
    \widetilde M_{a_i}=\Pi_{1,i}M_{a_i}.
    \label{eq:Mtilde-special}
\end{equation}

\item Let the input be $w\notin\mathcal A$ and set $y=\pi(w)$.  If
$y\notin\Omega_1$, or if $y=(1,v,i,t)$ with $v\neq h(i)$, set
\[
    \widetilde M_w=M_w.
\]
In this case $y$ cannot be a prescribed output, so
$M_w=M_w^{\mathrm{base}}$.

\item Suppose that $w\notin\mathcal A$ and
\[
    y=\pi(w)=(1,h(i),i,t).
\]
Using the baseline $M_w^{\mathrm{base}}$ from~\eqref{eq:candidate-baseline}, define
\begin{equation}
   \widetilde M_w
      =M_w^{\mathrm{base}}\Pi_{0,i}+\Pi_{1,i}M_w\Pi_{1,i}.
   \label{eq:Mtilde-candidate}
\end{equation}
On the unrecorded component $\Pi_{0,i}$, this rule retains the
table-independent baseline and suppresses the equality test $r_i=t$.  On the
recorded component $\Pi_{1,i}$, it permits the unmodified map $M_w$, because
cell $i$ is recorded on that component.
The right-hand $\Pi_{1,i}$ restricts $M_w$ to an input component in which
cell $i$ is already recorded.  The left-hand $\Pi_{1,i}$ keeps the resulting
output in the recorded subspace.  This output projection is needed because
$M_w$ uses $r_i$ as a
coherent control and can change the interference between basis values.  For
example, choose $t'\neq t$, let $d=w\oplus a_i\neq0$, and suppress the
common $Y$ factor and all table cells other than $R_i$.  Then
\[
   \ket\psi=\frac{\ket t-\ket{t'}}{\sqrt2}
      \in\operatorname{im}\Pi_{1,i},
   \qquad
   \Pi_{0,i}M_w\ket\psi
      =\frac{\ket d_G-\ket0_G}{\sqrt{2N}}\ket u_{R_i}\neq0.
\]
The equality-controlled action has moved part of a recorded input into the
unrecorded subspace.  The left projector removes this leakage.  Since
$M_w^{\mathrm{base}}$
acts trivially on the table,
$M_w^{\mathrm{base}}\Pi_{0,i}
 =\Pi_{0,i}M_w^{\mathrm{base}}\Pi_{0,i}$, so
\eqref{eq:Mtilde-candidate} is block diagonal in
$\Pi_{0,i}\oplus\Pi_{1,i}$.
\end{enumerate}
Projectors in these formulas act on the selected table-cell register $R_i$
and are tensored with the identity on all other registers.

The action on a basis table makes all three cases explicit.
For the special and candidate displays we regroup
$\mathcal R=R_i\otimes\mathcal R_{\ne i}$ and use the output order
$(Y,G,R_i,\mathcal R_{\ne i})$.
Write
\[
   y_{i,t}=(1,h(i),i,t)
      \quad(t\in X),
   \qquad
   \ket{p_t}=\Pi_{1,i}\ket t,
   \quad \|p_t\|^2=1-\frac1N,
   \qquad
   \ket{r_{\ne i}}=\bigotimes_{j\ne i}\ket{r_j}.
\]
On a special input,
\[
 \widetilde M\ket{a_i}\ket r
   =\ket{y_{i,r_i}}_Y\ket0_G
      \bigl(\Pi_{1,i}\ket{r_i}\bigr)_{R_i}\ket{r_{\ne i}}.
\]
On a noncandidate background input with $y=\pi(w)$,
\[
 \widetilde M\ket w\ket r
   =\ket y_Y\ket0_G\ket r_{\mathcal R}.
\]
Finally, on a candidate branch
$y=(1,h(i),i,t)$, let $d=w\oplus a_i$.  Then
\begin{align*}
 \widetilde M\ket w\ket r
 =\ket y_Y\Bigl[&
   \ket0_G\ket{r_i}_{R_i}
   +\left(\mathbf 1_{\{r_i=t\}}-\frac1N\right)
      \bigl(\ket d_G-\ket0_G\bigr)\ket{p_t}_{R_i}
   \Bigr]\ket{r_{\ne i}}.
\end{align*}
The projectors may therefore produce a superposition inside $R_i$ even when
the input table is a basis state.

The baseline term on the unrecorded subspace is essential.  For
$\ket{u_{\ne t}}=(N-1)^{-1/2}\sum_{\zeta\in X\setminus\{t\}}\ket\zeta$,
with all other table cells suppressed, the unmodified candidate-branch map
$M_w$ agrees with $M_w^{\mathrm{base}}$, so
\[
   \|\Pi_{0,i}M_w\ket{u_{\ne t}}\|
      =\sqrt{\frac{N-1}{N}}.
\]
Thus the output component whose $R_i$ factor lies in
$\operatorname{im}\Pi_{0,i}$ can have norm close to one.  Only its
table-dependent defect from the baseline is small:
\[
   \|(M_w-M_w^{\mathrm{base}})\Pi_{0,i}\|=\sqrt{\frac2N}.
\]
Equation~\eqref{eq:Mtilde-candidate} retains this baseline component and removes
only its table-dependent defect.  This is the essential difference from the
injective recording projection.

The next lemma verifies two logically separate facts.  The approximation
bound controls the cost of replacing one ideal macro by the recording map.
The contraction bound ensures that this cost can be telescoped over many
queries.  The second fact is not implied by proving that every fixed branch
$\widetilde M_w$ is a contraction: different input branches can have
overlapping output ranges and interfere coherently.

\begin{lemma}[One-query recording estimate]
\label{lem:permutation-one-query-recording}
For every fixed $h$ and $\pi$,
\[
   \|M-\widetilde M\|\leq\frac4{\sqrt N},
   \qquad
   \|\widetilde M\|\leq1.
\]
\end{lemma}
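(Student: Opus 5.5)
Both bounds will follow from a decomposition of $M-\widetilde M$ and of $\widetilde M$ along the output register $Y$. The key observation is that the three kinds of input branches have output ranges that overlap in a very controlled way. A special input $a_i$ has main output $y_{i,r_i}$, which lies in the $i$-block $\Omega_1^{(i)}=\{(1,h(i),i,t)\mid t\in X\}$. A candidate background input for index $i$ also lands in $\Omega_1^{(i)}$, namely at $\pi(w)=y_{i,t}$. Noncandidate background inputs land outside every block $\Omega_1^{(i)}$, since their $v$-coordinate differs from $h(i)$. Because $\pi$ is a bijection, distinct background inputs have distinct main outputs. So I will write $M=\sum_i M^{(i)}+M^{\mathrm{nc}}$ and $\widetilde M=\sum_i\widetilde M^{(i)}+M^{\mathrm{nc}}$. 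Here $M^{(i)}$ is the restriction to the inputs $\{a_i\}\cup\{w\notin\mathcal A:\pi(w)\in\Omega_1^{(i)}\}$, and its range lies in $\Omega_1^{(i)}\otimes G\otimes\mathcal R$. These ranges are mutually orthogonal for different $i$ and orthogonal to the noncandidate range, and the domains are disjoint as well. Hence $M$ and $\widetilde M$ are block diagonal, and it suffices to bound $\|M^{(i)}-\widetilde M^{(i)}\|$ and $\|\widetilde M^{(i)}\|$ for a single fixed $i$. On noncandidates the two maps coincide, and $M^{\mathrm{nc}}$ is an isometry on its block.

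Now fix $i$. The candidate inputs $w$ for index $i$ correspond bijectively to $t\in X$ through $\pi(w)=y_{i,t}$, and they have distinct $Y$-outputs $y_{i,t}$. Write $M^{(i)}=S+C$, with $S$ the special branch and $C=\sum_t C_t$ the candidate branches, and likewise $\widetilde M^{(i)}=\widetilde S+\widetilde C$.

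\textbf{Error bound.} The special error $S-\widetilde S=\Pi_{0,i}S$ is the injective recording defect, which has norm exactly $N^{-1/2}$ by the one-cell computation from Section~\ref{sec:injective-lower-bound}. For the candidates, the explicit formula gives, on branch $t$,
\[
  C_t-\widetilde C_t=\ket{y_{i,t}}\Bigl[(\Pi_{0,i}\text{ part of }(M_w-M^{\mathrm{base}}_w))\text{ on inputs }+\text{output leakage}\Bigr].
\]
More concretely, $M_w-\widetilde M_w=(M_w-M_w^{\mathrm{base}})\Pi_{0,i}+\Pi_{0,i}M_w\Pi_{1,i}$. I will check that each term has norm $O(N^{-1/2})$: the first is $\sqrt{2/N}$, computed above, and the second equals $\Pi_{0,i}(M_w-M^{\mathrm{base}}_w)\Pi_{1,i}$, because $M^{\mathrm{base}}_w$ preserves $\Pi_{1,i}$. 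Both terms factor through the rank-one operator $\ket t\!\bra t_{R_i}$ tensored with $(\ket d\!\bra0-\ket0\!\bra0)_G$, followed by a projection. Since the branches $t$ have orthogonal $Y$-outputs and orthogonal input labels $w$, the candidate error is a direct sum over $t$, so its norm is the maximum over $t$, at most $2\sqrt{2/N}$. The special error and the candidate error, however, both land in $\Omega_1^{(i)}$ and may overlap at the same $y_{i,t}$. I will therefore use the triangle inequality: $N^{-1/2}+2\sqrt2\,N^{-1/2}\le4N^{-1/2}$. A tighter argument is available but unnecessary.

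\textbf{Contraction.} This is the main obstacle, because $\widetilde S$ and $\widetilde C$ overlap coherently. My plan is to note that $\widetilde M^{(i)}$ is block diagonal with respect to $\Pi_{0,i}\oplus\Pi_{1,i}$ on $R_i$ for both input and output. This holds because $\widetilde S=\Pi_{1,i}S$ and $S$ is controlled by $r_i$. I need to verify that the special branch sends $\Pi_{0,i}$ inputs to zero output after the projection; if it does not, I will instead handle it as a compression.

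On the $\Pi_{0,i}$ input block, only the candidates act, through the baseline $M^{\mathrm{base}}$, which is an isometry with distinct $Y$-outputs. On the $\Pi_{1,i}$ block, the map is $\Pi_{1,i}M^{(i)}\Pi_{1,i}$, a compression of an isometry, since $M$ is an isometry by the preceding lemma, and so it is a contraction. The two blocks have orthogonal inputs and outputs. The remaining subtlety is the special branch on the $\Pi_{0,i}$ input component, which is $\Pi_{1,i}S\Pi_{0,i}$. To handle it, I will write $\widetilde M^{(i)}=\Pi_{1,i}M^{(i)}\Pi_{1,i}+\bigl(\Pi_{1,i}S\Pi_{0,i}+M^{\mathrm{base}}_C\Pi_{0,i}\bigr)$ and show that the second operator maps $\operatorname{im}\Pi_{0,i}$ isometrically into a space orthogonal to the range of the first. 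If that orthogonality fails, I will fall back on the identity $\widetilde M^{(i)}=\Pi_{1,i}M^{(i)}+\Pi_{0,i}M^{\mathrm{base}}\Pi_{0,i}$, restricted appropriately. The first term is an isometry compressed on the left, the second has range in $\operatorname{im}\Pi_{0,i}$, and checking $\widetilde M^{(i)\dagger}\widetilde M^{(i)}\le\Id$ then reduces to a direct computation of the cross term, using the explicit basis-table formulas.
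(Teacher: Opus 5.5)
\textbf{Approximation bound.} Your argument here is the paper's: the same splitting of $M_w-\widetilde M_w$, direct sums over distinct $Y$-outputs, and one triangle inequality, giving $(1+2\sqrt2)/\sqrt N$. One small correction: the candidates for $i$ are indexed by $\mathcal C_i$, not all of $X$, since $\pi^{-1}(y_{i,t})$ may be a special input.

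\textbf{Contraction: the gap.} The contraction part has a genuine gap.

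The block-diagonality you start from is false, and so is the claim that only candidates act on the $\Pi_{0,i}$ input. The special branch is precisely the recording step: $\Pi_{1,i}S\ket u=N^{-1/2}\sum_z\ket{y_{i,z}}\ket0_G\ket{p_z}$ has norm $\sqrt{1-1/N}$.

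Your first repair fails on both counts. The operator $\Pi_{1,i}S\Pi_{0,i}+M^{\mathrm{base}}_C\Pi_{0,i}$ is not isometric on $\operatorname{im}\Pi_{0,i}$. More seriously, its range is not orthogonal to that of $\Pi_{1,i}M^{(i)}\Pi_{1,i}$. Take $t\in\mathcal C_i$, apply the former to $\ket{a_i}\ket u$ and the latter to $\ket{w_t}\ket{p_t}$. The images satisfy
\[
 \bigl\langle \Pi_{1,i}S\ket u,\ \Pi_{1,i}C_t\ket{p_t}\bigr\rangle
   =N^{-3/2}\Bigl(1-\frac1N\Bigr)\neq0 .
\]
This is exactly the special--candidate interference encoded in the paper's identity $V_i^\dagger K_{i,t}=\frac1N\ket t\!\bra{p_t}$.

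Your fallback identity $\widetilde M^{(i)}=\Pi_{1,i}M^{(i)}+\Pi_{0,i}M^{\mathrm{base}}\Pi_{0,i}$ is also wrong as written. It contains the extra term $\Pi_{1,i}C\Pi_{0,i}\neq0$. The operator it defines is not even a contraction: on $\ket{w_t}\ket u$ its squared output norm is $1+\frac2N(1-\frac1N)$.

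Finally, the decisive inequality is deferred to ``a direct computation of the cross term.'' That computation is the entire content of the paper's four Gram identities and the completed square $-\frac1N(\|\gamma-c\|^2+\|c\|^2)\le0$.

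\textbf{The missing grouping.} With the correct grouping, your fallback becomes a proof shorter than the paper's. Let $P_{\mathrm{sp}}$ and $P_{\mathrm{cand}}$ project $Y$ onto $\ket{a_i}$ and onto the span of the candidate inputs of the $i$-block. Definitions \eqref{eq:Mtilde-special} and \eqref{eq:Mtilde-candidate} give exactly
\[
 \widetilde M^{(i)}
   =\Pi_{1,i}M^{(i)}\bigl(P_{\mathrm{sp}}+P_{\mathrm{cand}}\Pi_{1,i}\bigr)
    +M^{\mathrm{base}}P_{\mathrm{cand}}\Pi_{0,i}.
\]
Four facts now combine.
\begin{itemize}
\item The two input projectors are orthogonal and sum to the identity on the block.
\item The two output ranges lie in $\operatorname{im}\Pi_{1,i}$ and $\operatorname{im}\Pi_{0,i}$ respectively, because $M^{\mathrm{base}}$ does not act on the table.
\item $M^{(i)}$ is an isometry, since the ideal macro is isometric for each basis table and preserves the table label.
\item $M^{\mathrm{base}}$ is isometric on the candidate inputs.
\end{itemize}
Hence, writing $x'=(P_{\mathrm{sp}}+P_{\mathrm{cand}}\Pi_{1,i})x$,
\[
 \|\widetilde M^{(i)}x\|^2=\|x\|^2-\|\Pi_{0,i}M^{(i)}x'\|^2\leq\|x\|^2 ,
\]
with no cross term to control. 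The paper's Gram-identity computation evaluates this same defect, $\|\Pi_{0,i}M^{(i)}x'\|^2=\frac1N(\|\gamma-c\|^2+\|c\|^2)$. The grouping shows its sign is automatic once the recorded candidate components are placed with the special branch rather than treated as a separate block.
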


\begin{proof}
\noindent\emph{Approximation.}
On a special input $a_i$, the
macro copies the basis value $r_i$ of the table-cell register into the tag
field of the main output $Y$.  The same one-cell calculation used in
Section~\ref{sec:injective-lower-bound} gives
\[
   \|(M_{a_i}-\widetilde M_{a_i})\|
      =\|\Pi_{0,i}M_{a_i}\|
      =\frac1{\sqrt N}.
\]

Now consider a candidate branch labelled by $w$ with
$\pi(w)=(1,h(i),i,t)$.  Let $\Lambda_t=\ket t\!\bra t$ on $R_i$.  The difference
$M_w-M_w^{\mathrm{base}}$ is supported on $\Lambda_t$ and changes the
$G$-register from $\ket0$ to
the distinct basis state $\ket{w\oplus a_i}$.  Explicitly, on a basis table,
\[
 (M_w-M_w^{\mathrm{base}})\ket r=
 \begin{cases}
   0,&r_i\ne t,\\[1mm]
   \ket y_Y\bigl(\ket{w\oplus a_i}_G-\ket0_G\bigr)\ket r,
      &r_i=t.
 \end{cases}
\]
Hence
$M_w-M_w^{\mathrm{base}}=(M_w-M_w^{\mathrm{base}})\Lambda_t$ and
\[
   \|(M_w-M_w^{\mathrm{base}})\Pi_{0,i}\|
      =\sqrt2\,\|\Lambda_t\Pi_{0,i}\|
      =\sqrt{\frac2N}.
\]
Since $M_w^{\mathrm{base}}$ acts trivially on the table,
$\Pi_{0,i}M_w^{\mathrm{base}}\Pi_{1,i}=0$.  Moreover,
$M_w-M_w^{\mathrm{base}}$ is the tensor product of the
output vector
$\ket y_Y(\ket{w\oplus a_i}_G-\ket0_G)$, of norm $\sqrt2$, with $\Lambda_t$ on
$R_i$.  The rank-one operator $\Pi_{0,i}\Lambda_t\Pi_{1,i}$ has norm
$N^{-1/2}\sqrt{1-1/N}$.  Therefore
\[
   \|\Pi_{0,i}M_w\Pi_{1,i}\|
      =\|\Pi_{0,i}(M_w-M_w^{\mathrm{base}})\Pi_{1,i}\|
      =\sqrt{\frac2N}\sqrt{1-\frac1N}
      \leq\sqrt{\frac2N}.
\]
Using
\[
   M_w-\widetilde M_w
      =(M_w-M_w^{\mathrm{base}})\Pi_{0,i}+\Pi_{0,i}M_w\Pi_{1,i},
\]
we obtain
\[
   \|M_w-\widetilde M_w\|
      \leq2\sqrt{\frac2N}.
\]
Distinct background inputs have distinct main-output values $\pi(w)$ in $Y$, so their
error maps form a direct sum and the same bound holds for all background
branches together.  Likewise, the special error branches for distinct $i$
have orthogonal address fields in $Y$.  The special and background error
ranges need not be mutually orthogonal, so the triangle inequality between
those two aggregate maps gives
\[
   \|M-\widetilde M\|
      \leq\frac{1+2\sqrt2}{\sqrt N}
      <\frac4{\sqrt N}.
\]

\medskip\noindent\emph{Contraction.}
The only possible coherent overlap is between the special branch $a_i$ and
candidate background branches $w_t$ whose main output is $y_{i,t}$.  The
projection on the special branch and the modification of the candidate
branches reduce their diagonal squared norms.  The four Gram identities
below show that these losses dominate the cross terms for an arbitrary
coherent superposition.  A separate contraction estimate for each branch
would not establish this.  Fix $i\in X$.  For $t\in X$, let
$y_{i,t}=(1,h(i),i,t)$ and $w_t=\pi^{-1}(y_{i,t})$, and define the candidate set
\[
  \mathcal C_i
   =\{t\in X\mid w_t\notin\mathcal A\}.
\]
For $t\in\mathcal C_i$, write $\ket{p_t}=\Pi_{1,i}\ket t$.
Define the associated query-input subspace
\[
   \mathcal B_i
      =\operatorname{span}
        \bigl(\{\ket{a_i}\}\cup
        \{\ket{w_t}:t\in\mathcal C_i\}\bigr).
\]
We call $\mathcal B_i$, tensored with the table and spectator registers, the
$i$-block.  Let
$V_i=\widetilde M_{a_i}$ and $K_{i,t}=\widetilde M_{w_t}$.  The basis actions
above give, for $t,t'\in\mathcal C_i$,
\begin{align*}
 V_i^\dagger V_i
    &=\left(1-\frac1N\right)\Id,\\
 K_{i,t}^\dagger K_{i,t}
    &=\Id-\frac2N\ket{p_t}\!\bra{p_t},\\
 V_i^\dagger K_{i,t}
    &=\frac1N\ket t\!\bra{p_t},\\
 K_{i,t}^\dagger K_{i,t'}&=0
    \qquad(t\neq t').
\end{align*}
The first identity measures the uniform component removed from the special
branch.  The second measures the loss caused by keeping the candidate branch
block diagonal, the third is the remaining special--candidate overlap, and
the fourth follows from the distinct main outputs $y_{i,t}$.  All four
identities remain valid after tensoring with the algorithm workspace and the
other table cells.  Appendix~\ref{app:gram-identities} derives them directly
from the basis actions.

An arbitrary input supported on this block has the form
\[
   \ket{a_i}_Y\ket\gamma
     +\sum_{t\in\mathcal C_i}\ket{w_t}_Y\ket{\zeta_t},
\]
where $\ket\gamma$ and the $\ket{\zeta_t}$ are vectors on the table and
spectator registers.  Define
\[
   c=\sum_{t\in\mathcal C_i}
       \bigl(\ket t\!\bra{p_t}\otimes\Id\bigr)\ket{\zeta_t}.
\]
The orthogonality of the $\ket t$ gives
$\|c\|^2=\sum_{t\in\mathcal C_i}
\|(\bra{p_t}\otimes\Id)\ket{\zeta_t}\|^2$.  The four Gram identities show that
the squared output norm minus the squared input norm on this entire block is
\[
   -\frac1N\|\gamma\|^2-\frac2N\|c\|^2
      +\frac2N\operatorname{Re}\langle\gamma,c\rangle
   =-\frac1N\bigl(\|\gamma-c\|^2+\|c\|^2\bigr)
   \leq0.
\]
Thus the norm removed by the recording projections dominates every
special--candidate interference term.  The complete norm expansion and
completed-square calculation appear in
Appendix~\ref{app:contraction-calculation}.

Blocks with different addresses have orthogonal $Y$-outputs.  Every
remaining background branch also has a $Y$-output orthogonal to these
blocks; on those branches $\widetilde M=M$, and distinct inputs have distinct
$Y$-outputs because $\pi$ is a permutation.  The full map is therefore an
orthogonal sum of contractions, which proves $\|\widetilde M\|\leq1$.
\end{proof}

\subsection{Locality}
\label{subsec:permutation-locality}

We now verify the two structural hypotheses of
Theorem~\ref{thm:recording-to-birthday}.  For $D\subseteq X$, the table
subspace $\mathcal R_D$ consists of states in which exactly the cells indexed
by $D$ lie in their recorded subspaces, and $\Pi_D$ denotes its projector.
After $T$ query slots the retained system is
\[
   \mathsf S_T
      =\mathsf{Alg}\otimes G_1\otimes\cdots\otimes G_T.
\]
The recording global vector lies in $\mathsf S_T\otimes\mathcal R$.  Every
$\Pi_D$ below acts on $\mathcal R$, while the identity in
$(\Id\otimes\Pi_D)$ acts on $\mathsf S_T$.  Tracing out $\mathcal R$ retains
the algorithm system and all retained cleanup registers.

\begin{lemma}[Monotone recording and local dependence]
\label{lem:permutation-locality}
Fix $\pi$.  After $T$ queries in which every ideal macro $M$ is replaced by
$\widetilde M$, the final vector has a decomposition
\[
   \ket{\widetilde\Psi_{h,\pi}}
      =\sum_{\substack{D\subseteq X\\|D|\leq T}}
          \ket{\widetilde\Psi_{h,\pi,D}},
   \qquad
   \ket{\widetilde\Psi_{h,\pi,D}}
      =(\Id\otimes\Pi_D)\ket{\widetilde\Psi_{h,\pi}}.
\]
For every $D$, this vector depends on $h$ only through $h|_D$.
\end{lemma}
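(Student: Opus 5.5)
The plan is to obtain the lemma as the conclusion of the simultaneous induction in Appendix~\ref{app:sector-induction}; it then suffices to verify conditions~\ref{item:recording-sector} and~\ref{item:recording-local} of Theorem~\ref{thm:recording-to-birthday} for the single-slot map $\widetilde M$. The inter-query isometries act as the identity on $\mathcal R$, so they commute with every $\Pi_D$ and do not involve $h$. The induction hypothesis after $k$ slots is that the vector is supported on sectors with $|D|\le k$ and that each sector component depends only on $h|_D$. Writing
\[
(\Id\otimes\Pi_{D'})\widetilde M\ket{\Phi}=\sum_D \widetilde M^{D'\leftarrow D}\ket{\Phi_D},
\]
the step reduces to two facts. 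First, $\widetilde M^{D'\leftarrow D}$ vanishes unless $D\subseteq D'$ and $|D'\setminus D|\le1$. Second, this block depends only on $h|_{D'}$. Given both, each surviving $\ket{\Phi_D}$ has $D\subseteq D'$, so $h|_D$ is determined by $h|_{D'}$, and the new component depends only on $h|_{D'}$.

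For the sector-transition rule I will argue branch by branch on standard-basis query inputs $\ket w_Y$. Each $\widetilde M_w$ acts on at most one table cell $R_i$ and is the identity on the others, so it commutes with $\Pi_{0,j}$ and $\Pi_{1,j}$ for $j\ne i$.
\begin{itemize}
\item On a special input $a_i$, the map $\Pi_{1,i}M_{a_i}$ ends in the recorded subspace of $R_i$. It therefore sends $\mathcal R_D$ into $\mathcal R_{D\cup\{i\}}$.
\item On a noncandidate background branch, the map acts trivially on $\mathcal R$ and preserves every sector.
\item On a candidate branch, \eqref{eq:Mtilde-candidate} is block diagonal in $\operatorname{im}\Pi_{0,i}\oplus\operatorname{im}\Pi_{1,i}$, as noted after that display. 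It thus preserves both the recorded and the unrecorded status of cell $i$, and so preserves every sector.
\end{itemize}
Summing over the orthogonal input branches $w$ shows that every block is zero unless $D'=D$ or $D'=D\cup\{i\}$ for a single $i$.

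For locality, I will check which values of $h$ each nonzero block reads. The special branch $a_i$ uses $h(i)$, and its output lies in a sector containing $i$. The background branches are subtler, because whether $\pi(w)=(1,v,i,t)$ is a candidate depends on whether $v=h(i)$. On the $\Pi_{0,i}$ component, however, both the candidate and noncandidate rules equal $M_w^{\mathrm{base}}$, which does not depend on $h$. So the block into a sector with $i\notin D'$ is independent of $h(i)$. On the $\Pi_{1,i}$ component the output sector contains $i$, so reading $h(i)$ there is permitted. The remaining point is that the case selection itself, candidate or not, is made by a test on $h(i)$; I will make sure that test is only consulted on branches whose output sector contains $i$. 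This gives condition~\ref{item:recording-local} with $\theta=\pi$.

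The main obstacle is precisely this $h$-dependent case split on background branches. I expect to handle it by rewriting $\widetilde M_w$, for every $w\notin\mathcal A$ with $\pi(w)=(1,v,i,t)$, in the uniform form
\[
\widetilde M_w=M_w^{\mathrm{base}}\Pi_{0,i}+\Pi_{1,i}M_w\Pi_{1,i}.
\]
This identity also holds in the noncandidate case: there $M_w=M_w^{\mathrm{base}}$ acts trivially on $R_i$, so the right-hand side collapses to $M_w$. In this form, the only $h$-dependence (through $M_w$, via the test $v=h(i)$) sits in the $\Pi_{1,i}$ term, whose output lies in a sector containing $i$. That closes the induction and yields the stated decomposition with $|D|\le T$.
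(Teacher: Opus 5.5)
Your proposal is correct and follows essentially the paper's argument: a branchwise verification of the sector-transition rule and block locality for $\widetilde M$, followed by the sector induction of Appendices~\ref{app:sector-induction} and~\ref{app:permutation-locality}. Your uniform rewriting $\widetilde M_w=M_w^{\mathrm{base}}\Pi_{0,i}+\Pi_{1,i}M_w\Pi_{1,i}$ for all parsed background branches is a compact restatement of the paper's key observation. That observation is that $\widetilde M_w\Pi_D=M_w^{\mathrm{base}}\Pi_D$ whenever $i\notin D$, regardless of whether the branch is classified as a candidate.
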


\begin{proof}
The table starts in the empty sector, and inter-query operations do not act on
it.  Consider one application of $\widetilde M$.

A special query to $a_i$ ends with $\Pi_{1,i}$ by
\eqref{eq:Mtilde-special}.  It therefore changes a sector $D$ to
$D\cup\{i\}$.  Its dependence on $h$ is only through $h(i)$.  If $i$ was already in
$D$, the sector remains $D$ and the same dependence is allowed because
$h(i)$ is part of $h|_D$.

Consider a background branch whose main output parses as
$\pi(w)=(1,v,i,t)$.  If $v\neq h(i)$, this is a noncandidate branch:
$M_w$ is the table-independent baseline and preserves the sector.  If
$v=h(i)$, it is a candidate branch and is block diagonal in
$\Pi_{0,i}\oplus\Pi_{1,i}$ by~\eqref{eq:Mtilde-candidate}, so it also preserves
the sector.  When $i\notin D$, the cell lies in its $\Pi_{0,i}$ subspace, and
both alternatives use the same baseline action there.  Thus changing
$h(i)$ cannot change the action on a sector not containing $i$.  When
$i\in D$, the branch may depend on $h(i)$, but that value is contained in
$h|_D$.

Every remaining background branch is table independent and cannot introduce
dependence on an unrecorded value of $h$.  Consequently, one query adds at
most one index, and any dependence on $h(i)$ occurs only in a sector
containing $i$.  Induction over the query slots proves both assertions.
The induction is linear, so the same conclusion holds for the coherent sum
of all contributions to a sector.  Appendix~\ref{app:permutation-locality}
writes out the sector blocks and the induction in full.
\end{proof}

After tracing out the table, sectors with different labels have no cross
terms.  For fixed $\pi$, equation~\eqref{eq:sigma-D-varphi-definition}
therefore defines positive operators $\sigma(D,\varphi)$ on the retained system
$\mathsf S_T$, which includes all fresh $G$ registers.  Lemmas
\ref{lem:permutation-one-query-recording} and
\ref{lem:permutation-locality} verify the hypotheses of
Theorem~\ref{thm:recording-to-birthday} with
\[
   \mathcal Q_{h,\pi}=M,
   \qquad
   \widetilde{\mathcal Q}_{h,\pi}=\widetilde M,
   \qquad
   \delta=4N^{-1/2}.
\]
The parameter $\pi$ is uniform and independent of $h$ and $r$, exactly as
required by that theorem.

The next theorem records the resulting trace-distance estimate.

\begin{theorem}[Ideal trace-distance estimate]
\label{thm:ideal-trace-distance}
For $i\in\{0,1\}$, define $\rho^{\mathrm{ideal}}_i$ by sampling
$h\sim\Hcal_i$, a uniformly random tag table, and an independent uniform
permutation $\pi$; running the $T$-query strategy with the ideal macro
\eqref{eq:ideal-macro}; tracing out $\mathcal R$; and classically averaging
the resulting operators.  This is a state on $\mathsf S_T$, with every fresh
$G$ register retained.  If
$4T/\sqrt N\leq1$, then
\[
   \|\rho^{\mathrm{ideal}}_0-
      \rho^{\mathrm{ideal}}_1\|_1
      \leq
      \frac{T(T-1)}{N-1}+\frac{24T}{\sqrt N}.
\]
\end{theorem}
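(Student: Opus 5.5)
This estimate is a direct instantiation of Theorem~\ref{thm:recording-to-birthday}, and the plan is to check its hypotheses one by one for the ideal experiment. First I will fix the setup. The auxiliary parameter is $\theta=\pi$, the uniform reference permutation. Lemma~\ref{lem:reverse-coupling} shows that $\pi$ is independent of $(h,r)$ and has the same distribution in both hard classes. The tag table is purified in $\mathcal R$, initialized in $\ket U$, and a single table is reused across all query slots, as the theorem requires. At each slot the query map is $\mathcal Q_{h,\pi}=M$, acting from $Y\otimes\mathcal R$ into $Y\otimes G_j\otimes\mathcal R$ with a fresh cleanup register $G_j$. The retained system $\mathsf S$ therefore grows from slot to slot, ending as $\mathsf S_T$. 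The theorem explicitly allows such a growing system, since the comparison map appends a register that then belongs to $\mathsf S_{\rm out}$.

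The structural hypotheses come next. By the isometry lemma for the ideal macro, $M$ is an isometry. For each basis table $r$, the coherent extension displayed in Section~\ref{subsec:recording-contraction} acts as a fixed-table isometry $M^{(r)}$ and leaves the label $\ket r$ unchanged. This is the fixed-table condition. Condition~\ref{item:recording-close} holds with $\delta=4N^{-1/2}$ by Lemma~\ref{lem:permutation-one-query-recording}. That lemma supplies both $\|M-\widetilde M\|\le 4/\sqrt N$ and $\|\widetilde M\|\le1$, uniformly in $h$, $\pi$ and the slot.

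For conditions~\ref{item:recording-sector} and~\ref{item:recording-local}, I will use the blockwise analysis behind Lemma~\ref{lem:permutation-locality}, stated one slot at a time rather than as the end-of-computation conclusion. A special branch $a_i$ maps $\mathcal R_D$ into $\mathcal R_{D\cup\{i\}}$ and depends only on $h(i)$. Noncandidate background branches are table independent and preserve the sector. Candidate branches are block diagonal in $\Pi_{0,i}\oplus\Pi_{1,i}$, so they also preserve the sector. On $\Pi_{0,i}$ they equal the baseline, which does not depend on $h(i)$.

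The main obstacle is locality for a background branch. Whether $w$ is a candidate is itself decided by comparing the parsed $v$ with $h(i)$. I must therefore check that when $i\notin D'$, both alternatives reduce to the same baseline map on the unrecorded cell, so that the block $\widetilde M^{D'\leftarrow D}$ is unchanged when $h(i)$ changes. This is exactly what Appendix~\ref{app:permutation-locality} records, and I will cite it.

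With all three hypotheses in place, the final step is bookkeeping. The condition $\epsilon=T\delta=4T/\sqrt N\le1$ is precisely the assumption of the statement. Theorem~\ref{thm:recording-to-birthday} then gives
\[
  \|\rho^{\mathrm{ideal}}_0-\rho^{\mathrm{ideal}}_1\|_1\le\frac{T(T-1)}{N-1}+6\cdot\frac{4T}{\sqrt N}=\frac{T(T-1)}{N-1}+\frac{24T}{\sqrt N}.
\]
This inequality is the one asserted in the statement. The averaging over $\pi$ is already built into the theorem's $\rho_i$, and the traced-out system is only $\mathcal R$, with every $G_j$ retained.
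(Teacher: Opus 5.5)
Your proposal is correct and follows the paper's own argument: instantiate Theorem~\ref{thm:recording-to-birthday} with $\theta=\pi$, $\mathcal Q_{h,\pi}=M$, $\widetilde{\mathcal Q}_{h,\pi}=\widetilde M$ and $\delta=4N^{-1/2}$. Condition~\ref{item:recording-close} comes from Lemma~\ref{lem:permutation-one-query-recording}, the one-slot sector-transition and locality conditions come from the blockwise analysis in Appendix~\ref{app:permutation-locality}, and then $\epsilon=4T/\sqrt N\le 1$ gives the constant $6\cdot 4=24$. (Two minor points: in the ideal experiment $\pi$ is independent of $(h,r)$ by definition, so you do not need Lemma~\ref{lem:reverse-coupling} here; and the register $G_j$ is appended by the genuine macro $M$ itself rather than only by the comparison map, which the theorem allows because $\mathsf S_{\rm in}$ and $\mathsf S_{\rm out}$ may have different dimensions.)
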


\begin{proof}
Apply Theorem~\ref{thm:recording-to-birthday} with
$\delta=4N^{-1/2}$ and average
over the uniform permutation $\pi$, sampled independently of $h$ and $r$.
\end{proof}

\subsection{Final comparison}
\label{subsec:final-comparison}

For $i\in\{0,1\}$, let $\bar\rho_i$ be the classical average of the
strategy's original final state on $\mathsf{Alg}$ for $h\sim\Hcal_i$, one
uniformly random tag table sampled once, and one uniformly random completion
$P$ sampled once.  Define the augmented real state on
$\mathsf S_T=\mathsf{Alg}\otimes G_1\otimes\cdots\otimes G_T$ by
\[
  \rho_i^{\mathrm{real}}
    =\bar\rho_i\otimes
      \ket0\!\bra0_{G_1}\otimes\cdots\otimes\ket0\!\bra0_{G_T}.
\]
We represent this distribution by sampling the actual uniform completion
$P$, independently sampling a uniform ordered distinct tuple $W$, and
setting $\pi=P\circ\tau_W^{-1}$ as in
Lemma~\ref{lem:reverse-coupling}.  Thus the marginal of $P$ is exactly the
required real distribution, while the derived reference permutation $\pi$
is uniform and independent of $(h,r)$.  Equation~\eqref{eq:signed-simulation}
justifies the displayed augmentation: in the real simulation every $G_j$
returns to the same unentangled state $\ket0$.  Consequently
$\Tr_{G_1\cdots G_T}\rho_i^{\mathrm{real}}=\bar\rho_i$ and
\[
  \|\rho_0^{\mathrm{real}}-\rho_1^{\mathrm{real}}\|_1
    =\|\bar\rho_0-\bar\rho_1\|_1.
\]

Corollary~\ref{cor:completion-to-ideal} holds for every fixed $h,r$.  After
averaging within each promise class, it gives
\begin{equation}
   \|\rho^{\mathrm{real}}_i-
      \rho^{\mathrm{ideal}}_i\|_1
      \leq8T\sqrt{\frac NL}
   \qquad(i=0,1).
   \label{eq:real-ideal-each-class}
\end{equation}
The ideal state in this equation has exactly the distribution used in
Theorem~\ref{thm:ideal-trace-distance}, because the reverse-coupling lemma
makes the derived $\pi$ uniform and independent of $h$ and $r$.

The final triangle inequality uses the real-to-ideal estimate once for each
hard class.

\begin{theorem}[Permutation lower-bound estimate]
\label{thm:permutation-lower-bound-estimate}
Every $T$-query forward in-place strategy satisfies
\begin{equation}
   \|\rho^{\mathrm{real}}_0-
      \rho^{\mathrm{real}}_1\|_1
      \leq
      \frac{T(T-1)}{N-1}
      +\frac{24T}{\sqrt N}
      +16T\sqrt{\frac NL}
   \label{eq:permutation-final-quantitative}
\end{equation}
whenever $4T/\sqrt N\leq1$.
\end{theorem}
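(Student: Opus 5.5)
The bound follows from a three-term triangle inequality through the ideal experiment:
\[
 \|\rho^{\mathrm{real}}_0-\rho^{\mathrm{real}}_1\|_1
 \le \|\rho^{\mathrm{real}}_0-\rho^{\mathrm{ideal}}_0\|_1
 +\|\rho^{\mathrm{ideal}}_0-\rho^{\mathrm{ideal}}_1\|_1
 +\|\rho^{\mathrm{ideal}}_1-\rho^{\mathrm{real}}_1\|_1 .
\]
All of these states act on the common retained system $\mathsf S_T=\mathsf{Alg}\otimes G_1\otimes\cdots\otimes G_T$. The real states carry the cleanup registers in $\ket0$, which the discussion preceding the theorem justified through the exact representation~\eqref{eq:signed-simulation}.

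For the two outer terms I would invoke \eqref{eq:real-ideal-each-class}, once for each hard class. That inequality comes from Corollary~\ref{cor:completion-to-ideal}, which is uniform in $(h,r)$, followed by averaging over $h\sim\Hcal_i$ and the random table and using convexity of the trace norm. Together the outer terms contribute $16T\sqrt{N/L}$.

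For the middle term I would apply Theorem~\ref{thm:ideal-trace-distance} under the hypothesis $4T/\sqrt N\le1$. It gives $T(T-1)/(N-1)+24T/\sqrt N$.

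The step I expect to need the most care is checking that the ideal states appearing in \eqref{eq:real-ideal-each-class} are exactly those of Theorem~\ref{thm:ideal-trace-distance}. In the corollary, $\rho^{\mathrm{ideal}}_{h,r}$ is averaged over the coupled pair $(P,W)$, but the ideal computation depends on this pair only through $\pi=P\circ\tau_W^{-1}$ together with $F$. By Lemma~\ref{lem:reverse-coupling}, conditional on $F$ the permutation $\pi$ is uniform. So averaging over $(P,W)$ equals averaging over an independent uniform $\pi$, which is the distribution used in Theorem~\ref{thm:ideal-trace-distance}. Tracing out the purified table there corresponds to the classical average over a once-sampled fixed table, as explained in Section~\ref{sec:recording-theorem}. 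With this identification, summing the three contributions yields \eqref{eq:permutation-final-quantitative}.
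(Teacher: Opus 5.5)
Your proposal is correct and follows the paper's proof: a triangle inequality through the ideal states. You apply \eqref{eq:real-ideal-each-class} once per hard class for the $16T\sqrt{N/L}$ term and Theorem~\ref{thm:ideal-trace-distance} for the middle term. Your identification of the two ideal states via Lemma~\ref{lem:reverse-coupling} (the ideal computation depends on $(P,W)$ only through $\pi$ and $F$, and $\pi$ is uniform given $F$) is exactly the justification the paper gives immediately before the theorem.
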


\begin{proof}
Use the triangle inequality, equation~\eqref{eq:real-ideal-each-class} for
the two promise classes, and
Theorem~\ref{thm:ideal-trace-distance}.
\end{proof}

For the present construction $L=2N^3$, so
\[
   16T\sqrt{\frac NL}
      =\frac{8\sqrt2\,T}{N}.
\]
Suppose $T\leq\sqrt N/100$.  The condition of
Theorem~\ref{thm:permutation-lower-bound-estimate} then holds, and for every
$N\geq2$ the three terms in~\eqref{eq:permutation-final-quantitative} are at
most
\[
   \frac1{5000},
   \qquad
   \frac{24}{100},
   \qquad
   \frac8{100},
\]
respectively.  Their sum is strictly less than $2/3$.

On the other hand, an algorithm that succeeds with probability at least
$2/3$ on every promised permutation also distinguishes the two equal-prior
hard distributions with success probability at least $2/3$.  Helstrom's
bound would then require
\[
   \|\rho^{\mathrm{real}}_0-
      \rho^{\mathrm{real}}_1\|_1\geq\frac23,
\]
a contradiction.  Therefore every bounded-error forward in-place algorithm
uses more than $\sqrt N/100$ queries.  Together with the collision-sampling
upper bound, this proves
\[
   Q_{\mathrm{IP}}(\mathrm{PermutationGarbageSimon}_n)
      =\Theta(\sqrt N)=\Theta(2^{n/2}).
\]
The XOR upper bound is at most $n+2$.  Since the permutation universe has
$L=2^{3n+1}$ points, these bounds are $O(\log L)$ versus
$\Theta(L^{1/6})$.

\begin{remark}[One-bit compression]
\label{rem:one-bit-compression}
For the present promise, the explicit copy of $x$ can be replaced by one
bit.  Let
\[
   \Omega'=\{0,1\}\times X\times\{0,1\}\times X,
   \qquad L'=|\Omega'|=4N^2,
\]
and use the special inputs and prescribed outputs
\[
   a'_x=(0,x,0,0^n),
   \qquad
   F'_x=(1,h(x),b_x,r_x).
\]
For every promised $h$, allow any bit table $b:X\to\{0,1\}$ for which
\[
   \kappa(x)=(h(x),b_x)
\]
is injective.  If $h$ is a permutation, every bit table is allowed.  If $h$
has Simon shift $s$, this condition is exactly
$b_{x\oplus s}=1-b_x$ for every $x$.  The values $F'_x$ are therefore
distinct, and their marker bits separate them from the special inputs.
Define the compressed promise exactly as before: allow every permutation
$P':\Omega'\to\Omega'$ satisfying $P'(a'_x)=F'_x$ for all $x$, and sample
the completion uniformly in the hard distributions.

In the hard distributions, choose the $b_x$ independently and uniformly in
the permutation case.  In the Simon case, independently orient each pair
$\{x,x\oplus s\}$, assigning bit $0$ to one endpoint and bit $1$ to the
other.  If $D$ contains no complete Simon pair, then $h|_D$ is a uniform
injection in both hard distributions, while the bits $(b_x)_{x\in D}$ are
independent and uniform.  Thus $(h,b)|_D$ has the same distribution in the
two cases, and the exceptional event in the recording-to-birthday argument
is still precisely that $D$ contains a Simon pair.

The permutation-specific parts of the proof require only a corresponding
change in how a candidate index is identified.  The new candidate outputs
are
\[
   y'_{i,t}=(1,h(i),b_i,t).
\]
Their prefixes $(h(i),b_i)=\kappa(i)$ are distinct, so a candidate output
determines a unique index $i$.  In a sector $D$, a candidate with $i\notin D$
acts by the same baseline map as a noncandidate branch, whereas a candidate
with $i\in D$ is determined by $\kappa|_D$.  This gives the same block
locality.  The reverse coupling uses only the distinctness of the prescribed
outputs, and the contraction and Gram calculations use only the distinctness
of the candidate labels and the unchanged $n$-bit tag states.  Consequently
the preceding proof applies with the hidden local data $(h,b)$ and with $L$
replaced by $L'$.

In particular, the last term in
Theorem~\ref{thm:permutation-lower-bound-estimate} becomes
\[
   16T\sqrt{\frac{N}{L'}}=\frac{8T}{\sqrt N}.
\]
The same choice $T\leq\sqrt N/100$ therefore keeps the trace distance below
$2/3$ and gives an $\Omega(\sqrt N)$ lower bound.  The standard-query and
collision-sampling upper bounds are unchanged because they use only the
$h$-coordinate.  Hence the compressed permutation problem has query
complexities $O(\log L')$ and
$\Theta(\sqrt N)=\Theta((L')^{1/4})$ in the standard and forward in-place
models, respectively.
\end{remark}

This completes the proof of Theorem~\ref{thm:combined-separation}.

\section{Discussion}
\label{sec:combined-discussion}

Kashefi, Kent, Vedral, and Banaszek initiated the systematic comparison of
standard and minimal oracle interfaces~\cite{Kashefi2002}.
Aaronson's Set Comparison construction gives a separation in the opposite
direction~\cite{Aaronson2002Collision}; his later Problem~11 asks whether a
property of
injective functions can be learned with asymptotically fewer standard
queries than forward-erasing queries~\cite{Aaronson2021Open}.  The injective
construction answers this question directly for a map $X\to X^3$, and the
permutation construction realizes the same phenomenon for a Boolean promise
problem on permutations.

Restricting a promised permutation to its special inputs recovers the
injection $x\mapsto(h(x),x,r_x)$, but the injective corollary does not control
queries outside that set.  The reverse coupling and candidate-branch
recording estimate handle these background queries.  Once those
permutation-specific steps are established, the sector, positivity, and
birthday argument is reused through
Theorem~\ref{thm:recording-to-birthday}.

The separation relies on the forward-only convention.  As explained in
Section~\ref{subsec:notation-and-queries}, supplying both a map and its inverse
would implement one XOR query with two calls and eliminate the separation.
The in-place model studied by Holman,
Ramachandran, and Yirka is also forward-only~\cite{Holman2025}.  Their work
establishes separations for simulation and coherent state conversion and
poses a different decision problem.  The promise problem here proves an
existential bounded-error decision separation; it does not prove the
conjectured lower bound for their particular total problem.

Two fixed sources of randomness appear in the permutation hard distribution.
The tag table is sampled once and reused coherently on every query.  The
permutation completion is likewise sampled once and fixed.  Neither source
is refreshed during the computation.  The table purification and the
reverse coupling are analytical representations of these fixed classical
choices, not changes to the oracle model.  Likewise, the recording maps are
linear contractions used in a hybrid argument; they are neither physical
measurements nor postselected and renormalized queries.

Finally, the explicit copy of \(x\) in each special output of the detailed
construction makes injectivity and the candidate index immediate from the
output tuple.  Remark~\ref{rem:one-bit-compression} shows that it is not
essential: a one-bit separator suffices for these promise classes.  In either
version, this coordinate does not itself destroy Simon interference; with a
constant tag table the usual Simon samples can still be recovered.  The lower
bound comes from the fixed random tags and the positive-semidefinite
contributions obtained from the sector decomposition.

Theorem~\ref{thm:recording-to-birthday} separates this reusable part of the
argument from the interface-specific construction of a recording
contraction.  It applies directly to other forward-only interfaces built on
the same hard distributions whenever the fixed oracle randomness has a
cellwise purification and each query has a nearby contraction with the stated
sector-transition and locality properties.  For different hard
distributions, the proof gives the same template, but the comparison of local
restrictions and the resulting exceptional-set probability must be proved for
those distributions.

The full sector induction, Gram identities, contraction calculation, and
permutation-locality induction appear in
Appendices~\ref{app:sector-induction}--\ref{app:permutation-locality}.

Aaronson motivated Problem~11 by suggesting that, under suitable conditions,
sufficient garbage in an erasing oracle should have an effect analogous to
decohering or measuring its responses~\cite{Aaronson2021Open}.  The proof
above realizes this mechanism quantitatively for the present construction.
For every fixed tag table, the oracle remains fully coherent and performs no
measurement.  In the analysis, however, we purify the random tag table and
replace each query by a recording contraction at operator-norm cost
\(O(N^{-1/2})\).  The resulting evolution decomposes into orthogonal sectors
indexed by the recorded table cells.  Tracing out the table removes the cross
terms between distinct sectors, so the garbage acts as an inaccessible record
and produces the effective decoherence used in the birthday argument.
Theorem~\ref{thm:recording-to-birthday} isolates the conditions needed for
this conclusion; it does not assert such an equivalence for arbitrary
garbage or arbitrary erasing oracles.

\paragraph{Acknowledgement.} This work was supported by Tamkeen under the NYUAD Research Institute grant CG008.

\paragraph{Use of generative AI.}
The authors used ChatGPT iteratively in developing the results presented in this paper. Especially for the results in the second part of the paper, ChatGPT was extensively used to try out various alternatives before settling on the proof presented here. 
The ideas in the initial draft were corrected, simplified and substantially rewritten by the authors. Claude was also used to review the entire paper. The authors have checked all the proofs and take full responsibility for the correctness of results presented here.

\clearpage
\appendix

\section{Sector induction}
\label{app:sector-induction}

Here we provide the two inductions and the algebraic identities used in the
conceptual proofs in the main text.  These arguments introduce no additional
assumptions and use the same register order and notation as the main text.

This section proves the sector-support and locality claims used in
Theorem~\ref{thm:recording-to-birthday}.  Only
Conditions~\ref{item:recording-sector} and~\ref{item:recording-local} are
needed.

\begin{proof}[Proof of the sector and locality claims]
Fix $\theta$.  For $0\leq\ell\leq T$, let
$\ket{\widetilde\Psi^{(\ell)}_{h,\theta}}$ denote the complete global vector
after $\ell$ applications of the recording contraction, including the
oracle-independent isometry following the $\ell$th query slot.  The initial
oracle-independent preparation is included in the checkpoint $\ell=0$.
Let $\mathsf S_\ell$ be the retained system at this checkpoint, and define
\[
   \ket{\widetilde\Psi^{(\ell)}_{h,\theta,D}}
      =
      (\Id_{\mathsf S_\ell}\otimes\Pi_D)
      \ket{\widetilde\Psi^{(\ell)}_{h,\theta}}.
\]

We prove simultaneously that
\begin{equation*}
   \ket{\widetilde\Psi^{(\ell)}_{h,\theta}}
      =
      \sum_{\substack{D\subseteq X\\|D|\leq\ell}}
         \ket{\widetilde\Psi^{(\ell)}_{h,\theta,D}}
   \tag{$A_\ell$}
\end{equation*}
and that, for every $D\subseteq X$,
\begin{equation*}
   h|_D=h'|_D
   \quad\Longrightarrow\quad
   \ket{\widetilde\Psi^{(\ell)}_{h,\theta,D}}
      =
   \ket{\widetilde\Psi^{(\ell)}_{h',\theta,D}}.
   \tag{$L_\ell$}
\end{equation*}
The first assertion says that at most $\ell$ cells are recorded after
$\ell$ queries.  The second states locality as equality of complete sector
vectors, including every register retained by the comparison computation.

The sector projectors resolve the identity:
\[
   \sum_{D\subseteq X}\Pi_D
      =
   \bigotimes_{i\in X}(\Pi_{0,i}+\Pi_{1,i})
      =
   \Id_{\mathcal R}.
\]
Thus $(A_\ell)$ is equivalent to the vanishing of every component with
$|D|>\ell$.

At checkpoint $0$, the retained vector is independent of $h$, and the
table-purification register is in
\[
   \ket U=\bigotimes_{i\in X}\ket u.
\]
Since $\Pi_{0,i}\ket u=\ket u$ and $\Pi_{1,i}\ket u=0$, the table vector
lies entirely in $\mathcal R_\varnothing$.  Hence $(A_0)$ holds.  The
empty-sector vector is independent of $h$, and every other sector vector is
zero, so $(L_0)$ also holds.

Suppose $(A_\ell)$ and $(L_\ell)$ hold.  Let
$\mathcal U_{\ell+1}$ be the oracle-independent isometry following query
slot $\ell+1$, with its identity action on $\mathcal R$ understood.  At this
step, $\widetilde{\mathcal Q}_{h,\theta}$ denotes the recording contraction
for slot $\ell+1$; as in the theorem statement, the slot index is
suppressed.  Let $\mathsf T_{\ell+1}$ be the retained system immediately
after that query and before $\mathcal U_{\ell+1}$.  Although the retained
input and output spaces of $\mathcal U_{\ell+1}$ may differ, its identity
action on $\mathcal R$ preserves every sector label:
\[
 (\Id_{\mathsf S_{\ell+1}}\otimes\Pi_{D'})
 (\mathcal U_{\ell+1}\otimes\Id_{\mathcal R})
 =
 (\mathcal U_{\ell+1}\otimes\Id_{\mathcal R})
 (\Id_{\mathsf T_{\ell+1}}\otimes\Pi_{D'}).
\]
The component in an output sector $D'$ is therefore
\begin{equation}
\begin{aligned}
   \ket{\widetilde\Psi^{(\ell+1)}_{h,\theta,D'}}
      &=
      \sum_{\substack{D\subseteq X\\|D|\leq\ell}}
      \mathcal U_{\ell+1}
      \widetilde{\mathcal Q}_{h,\theta}^{D'\leftarrow D}
      \ket{\widetilde\Psi^{(\ell)}_{h,\theta,D}}.
\end{aligned}
\label{eq:sector-induction-recurrence}
\end{equation}
Here $(A_\ell)$ supplies the predecessor sectors, and inserting the input
and output sector projectors gives the blocks in
Condition~\ref{item:recording-sector}.

Every nonzero summand in
\eqref{eq:sector-induction-recurrence} satisfies
\[
   D\subseteq D',
   \qquad
   |D'\setminus D|\leq1.
\]
Since $|D|\leq\ell$, such a summand can occur only when
$|D'|\leq\ell+1$.  This proves $(A_{\ell+1})$.

For locality, fix $D'$ and suppose $h|_{D'}=h'|_{D'}$.  Only predecessors
$D\subseteq D'$ can contribute.  For each such $D$, the two functions also
agree on $D$, so $(L_\ell)$ gives
\[
   \ket{\widetilde\Psi^{(\ell)}_{h,\theta,D}}
      =
   \ket{\widetilde\Psi^{(\ell)}_{h',\theta,D}}.
\]
Condition~\ref{item:recording-local} gives
\[
   \widetilde{\mathcal Q}_{h,\theta}^{D'\leftarrow D}
      =
   \widetilde{\mathcal Q}_{h',\theta}^{D'\leftarrow D}.
\]
Finally, $\mathcal U_{\ell+1}$ is independent of $h$.  Every summand in
\eqref{eq:sector-induction-recurrence}, and hence their coherent sum, is
therefore unchanged when $h$ is replaced by $h'$.  This proves
$(L_{\ell+1})$.

Taking $\ell=T$ yields the sector decomposition and locality statement used
in the proof of Theorem~\ref{thm:recording-to-birthday}.
\end{proof}

\section{Gram identities}
\label{app:gram-identities}

We derive the four identities used in the contraction part of
Lemma~\ref{lem:permutation-one-query-recording}.  Fix $i\in X$ and, for
$t\in X$, let $y_{i,t}=(1,h(i),i,t)$ and $w_t=\pi^{-1}(y_{i,t})$.  Define
\[
   \mathcal C_i
      =\{t\in X\mid w_t\notin\mathcal A\},
   \qquad
   \ket{p_t}=\Pi_{1,i}\ket t
   \quad(t\in\mathcal C_i).
\]
Let $d_t=w_t\oplus a_i$.  Since $w_t\notin\mathcal A$ and
$a_i\in\mathcal A$, we have $d_t\neq0$, so
$\langle0|d_t\rangle_G=0$.  We suppress the identity on all table cells
other than $R_i$.  On the selected cell, the relevant unmodified maps act
as
\[
   M_{a_i}\ket z
      =\ket{y_{i,z}}_Y\ket0_G\ket z_{R_i},
   \qquad
   M_{w_t}^{\mathrm{base}}\ket z
      =\ket{y_{i,t}}_Y\ket0_G\ket z_{R_i},
\]
and
\[
   M_{w_t}\ket z
      =
      \begin{cases}
         \ket{y_{i,t}}_Y\ket{d_t}_G\ket t_{R_i},
            &z=t,\\[1mm]
         \ket{y_{i,t}}_Y\ket0_G\ket z_{R_i},
            &z\neq t.
      \end{cases}
\]
The two branch maps of the recording contraction in the $i$-block are
\[
   V_i=\widetilde M_{a_i}=\Pi_{1,i}M_{a_i},
   \qquad
   K_{i,t}=\widetilde M_{w_t}
      =M_{w_t}^{\mathrm{base}}\Pi_{0,i}
        +\Pi_{1,i}M_{w_t}\Pi_{1,i}.
\]

We begin with products of the unmodified maps.  For $z,z'\in X$,
\begin{align*}
 \bra z M_{a_i}^{\dagger}\Pi_{0,i}M_{a_i}\ket{z'}
   &=
   \langle y_{i,z}|y_{i,z'}\rangle
   \langle0|0\rangle_G
   \bra z\Pi_{0,i}\ket{z'} \\
   &=\frac1N\delta_{z,z'}.
\end{align*}
Therefore
\begin{equation}
   M_{a_i}^{\dagger}\Pi_{0,i}M_{a_i}
      =\frac1N\Id.
   \label{eq:app-special-product}
\end{equation}

For the candidate branch, the main output is always $y_{i,t}$.  After the
output table cell is projected with
$\Pi_{0,i}=\ket u\!\bra u$, two basis inputs have nonzero overlap exactly
when their $G$ labels agree.  Consequently,
\[
 \bra zM_{w_t}^{\dagger}\Pi_{0,i}M_{w_t}\ket{z'}
 =
 \begin{cases}
   \dfrac1N,
      &z=z'=t,\\[1mm]
   \dfrac1N,
      &z\neq t\text{ and }z'\neq t,\\[1mm]
   0,
      &\text{exactly one of $z,z'$ equals $t$}.
 \end{cases}
\]
Equivalently,
\begin{equation}
 M_{w_t}^{\dagger}\Pi_{0,i}M_{w_t}
 =
 \frac1N\left[
   \ket t\!\bra t
   +
   \left(\sum_{z\neq t}\ket z\right)
   \left(\sum_{z'\neq t}\bra{z'}\right)
 \right].
 \label{eq:app-candidate-product}
\end{equation}
Since
\[
   \Pi_{1,i}\ket t=\ket{p_t},
   \qquad
   \Pi_{1,i}\sum_{z\neq t}\ket z
      =\Pi_{1,i}\bigl(\sqrt N\ket u-\ket t\bigr)
      =-\ket{p_t},
\]
sandwiching~\eqref{eq:app-candidate-product} between $\Pi_{1,i}$ on both
sides gives
\begin{equation}
 \Pi_{1,i}M_{w_t}^{\dagger}\Pi_{0,i}M_{w_t}\Pi_{1,i}
    =\frac2N\ket{p_t}\!\bra{p_t}.
 \label{eq:app-candidate-defect}
\end{equation}

For the mixed product, the main outputs of $M_{a_i}\ket z$ and
$M_{w_t}\ket{z'}$ agree only when $z=t$.  In that case their $G$ labels
agree only when $z'\neq t$.  Hence
\[
 M_{a_i}^{\dagger}\Pi_{0,i}M_{w_t}
 =
 \frac1N\ket t\left(\sum_{z'\neq t}\bra{z'}\right).
\]
The identity
\[
   \left(\sum_{z'\neq t}\bra{z'}\right)\Pi_{1,i}
      =-\bra{p_t}
\]
then yields
\begin{equation}
 M_{a_i}^{\dagger}\Pi_{0,i}M_{w_t}\Pi_{1,i}
   =-\frac1N\ket t\!\bra{p_t}.
 \label{eq:app-mixed-product}
\end{equation}

We also need
\begin{equation}
   M_{a_i}^{\dagger}M_{w_t}=0.
   \label{eq:app-unmodified-orthogonality}
\end{equation}
To verify it, first consider unequal table-basis inputs; their unchanged
table labels are orthogonal.  For equal inputs $z=z'\neq t$, the main
outputs $y_{i,z}$ and $y_{i,t}$ are distinct.  For $z=z'=t$, the main
outputs agree, but the $G$ labels are $0$ and $d_t$, which are distinct.

We can now establish the four Gram identities.  First,
using~\eqref{eq:app-special-product} and the fact that $M_{a_i}$ is an
isometry,
\begin{align*}
 V_i^\dagger V_i
   &=M_{a_i}^{\dagger}\Pi_{1,i}M_{a_i} \\
   &=M_{a_i}^{\dagger}
       (\Id-\Pi_{0,i})M_{a_i} \\
   &=\left(1-\frac1N\right)\Id.
\end{align*}

Second, the two summands of $K_{i,t}$ have output table factors in the
orthogonal subspaces $\operatorname{im}\Pi_{0,i}$ and
$\operatorname{im}\Pi_{1,i}$.  Moreover,
$M_{w_t}^{\mathrm{base}}$ is an isometry and acts trivially on the table.
It follows from~\eqref{eq:app-candidate-defect} that
\begin{align*}
 K_{i,t}^{\dagger}K_{i,t}
   &=
   \Pi_{0,i}
   +
   \Pi_{1,i}M_{w_t}^{\dagger}
      \Pi_{1,i}M_{w_t}\Pi_{1,i} \\
   &=
   \Pi_{0,i}+\Pi_{1,i}
   -
   \Pi_{1,i}M_{w_t}^{\dagger}
      \Pi_{0,i}M_{w_t}\Pi_{1,i} \\
   &=
   \Id-\frac2N\ket{p_t}\!\bra{p_t}.
\end{align*}

Third, the baseline part of $K_{i,t}$ has output table factor in
$\operatorname{im}\Pi_{0,i}$, so it is killed by the $\Pi_{1,i}$ in
$V_i^\dagger$.  Using~\eqref{eq:app-mixed-product} and
\eqref{eq:app-unmodified-orthogonality},
\begin{align*}
 V_i^\dagger K_{i,t}
   &=
   M_{a_i}^{\dagger}\Pi_{1,i}M_{w_t}\Pi_{1,i} \\
   &=
   M_{a_i}^{\dagger}M_{w_t}\Pi_{1,i}
   -
   M_{a_i}^{\dagger}\Pi_{0,i}M_{w_t}\Pi_{1,i} \\
   &=
   \frac1N\ket t\!\bra{p_t}.
\end{align*}
The sign in the last line is positive because the subtraction in the second
line is applied to the negative operator in
\eqref{eq:app-mixed-product}.

Finally, every vector in the range of $K_{i,t}$ has its $Y$ register in the
basis state $\ket{y_{i,t}}$.  For $t\neq t'$, these basis states are
orthogonal, so
\[
   K_{i,t}^{\dagger}K_{i,t'}=0.
\]
These calculations prove all four identities in the main text.  Tensoring
the maps with the identity on $\mathcal R_{\ne i}$ and on the algorithm
workspace does not change them.

\section{Contraction calculation}
\label{app:contraction-calculation}

We now use the Gram identities to prove the contraction estimate on an
arbitrary coherent input, including interference between the special input
$a_i$ and all candidate background inputs associated with $i$.

An arbitrary input supported on the $i$-block has the form
\[
   \ket{\Phi_{\mathrm{in}}}
      =
      \ket{a_i}_Y\ket\gamma
      +
      \sum_{t\in\mathcal C_i}
         \ket{w_t}_Y\ket{\zeta_t},
\]
where $\ket\gamma$ and the $\ket{\zeta_t}$ are arbitrary vectors on the
table and all spectator registers.  The input basis states in $Y$ are
orthogonal, so
\[
   \|\Phi_{\mathrm{in}}\|^2
      =
      \|\gamma\|^2
      +
      \sum_{t\in\mathcal C_i}\|\zeta_t\|^2.
\]
The output of the recording map is
\[
   \ket{\Phi_{\mathrm{out}}}
      =
      V_i\ket\gamma
      +
      \sum_{t\in\mathcal C_i}
         K_{i,t}\ket{\zeta_t}.
\]
Expanding the squared norm and inserting the four identities from
Appendix~\ref{app:gram-identities} gives
\begin{align*}
 \|\Phi_{\mathrm{out}}\|^2
   &=
   \left(1-\frac1N\right)\|\gamma\|^2
   +
   \sum_{t\in\mathcal C_i}\|\zeta_t\|^2 \\
   &\quad
   -\frac2N
      \sum_{t\in\mathcal C_i}
      \left\|
         (\bra{p_t}\otimes\Id)\ket{\zeta_t}
      \right\|^2 \\
   &\quad
   +\frac2N\operatorname{Re}
      \left\langle
         \gamma,
         \sum_{t\in\mathcal C_i}
         (\ket t\!\bra{p_t}\otimes\Id)
         \ket{\zeta_t}
      \right\rangle.
\end{align*}
There are no off-diagonal candidate--candidate terms because
$K_{i,t}^{\dagger}K_{i,t'}=0$ for $t\neq t'$.

Define
\[
   c
      =
      \sum_{t\in\mathcal C_i}
      (\ket t\!\bra{p_t}\otimes\Id)\ket{\zeta_t}.
\]
If
\[
   \ket{\chi_t}
      =(\bra{p_t}\otimes\Id)\ket{\zeta_t},
\]
then $c=\sum_t\ket t\ket{\chi_t}$.  The computational-basis states
$\ket t$ are orthonormal, even though the vectors $\ket{p_t}$ need not be.
Therefore
\[
   \|c\|^2
      =
      \sum_{t\in\mathcal C_i}\|\chi_t\|^2
      =
      \sum_{t\in\mathcal C_i}
      \left\|(\bra{p_t}\otimes\Id)\ket{\zeta_t}\right\|^2.
\]
Subtracting the input norm from the output norm and completing the square
now gives
\begin{align*}
 \|\Phi_{\mathrm{out}}\|^2-\|\Phi_{\mathrm{in}}\|^2
   &=
   -\frac1N\|\gamma\|^2
   -\frac2N\|c\|^2
   +\frac2N\operatorname{Re}\langle\gamma,c\rangle \\
   &=
   -\frac1N
      \bigl(\|\gamma-c\|^2+\|c\|^2\bigr)
   \leq0.
\end{align*}
Thus $\widetilde M$ is a contraction on every $i$-block.

It remains to combine the blocks.  Blocks with different indices $i$ have
orthogonal $Y$ supports because the address field is part of $y_{i,t}$.
A remaining background branch cannot have $Y$ output equal to an output in
a candidate block: if $\pi(w)=y_{i,t}$ for a background input $w$, then it
is the candidate branch indexed by $i,t$.  On every remaining background
branch $\widetilde M=M$, and distinct inputs have distinct $Y$ outputs
because $\pi$ is a permutation.  The complete input space therefore
decomposes into orthogonal blocks whose output spaces are also orthogonal.
Since the map is contractive on each block,
\[
   \|\widetilde M\|\leq1.
\]

\section{Permutation locality}
\label{app:permutation-locality}

We give the full sector-block verification and induction for
Lemma~\ref{lem:permutation-locality}.  In this section we write
$\widetilde M_{h,\pi}$ when the dependence of the recording map on $h$ and
$\pi$ needs to be explicit.  For a basis input $w$, let
$\widetilde M_{h,\pi,w}:=\widetilde M_{h,\pi}\iota_w$ denote its branch map,
in the notation of Section~\ref{subsec:recording-contraction}.

For one query slot, define
\[
   \widetilde M_{h,\pi}^{D'\leftarrow D}
      =
      (\Id_{\mathsf S_{\rm out}}\otimes\Pi_{D'})
      \widetilde M_{h,\pi}
      (\Id_{\mathsf S_{\rm in}}\otimes\Pi_D).
\]
We first prove the following two one-query facts:
\begin{align*}
 \widetilde M_{h,\pi}^{D'\leftarrow D}=0
   &\quad\text{unless}\quad
   D\subseteq D'\text{ and }|D'\setminus D|\leq1,
   \tag{A}\label{eq:app-permutation-transition}\\
 h|_{D'}=h'|_{D'}
   &\quad\Longrightarrow\quad
   \widetilde M_{h,\pi}^{D'\leftarrow D}
      =\widetilde M_{h',\pi}^{D'\leftarrow D}.
   \tag{B}\label{eq:app-permutation-block-locality}
\end{align*}
Because the query-input states $\ket w_Y$ form a basis, it suffices to
verify both claims on each input branch.

Consider first the special input $w=a_i$.  By
\eqref{eq:Mtilde-special}, its output table component lies in
$\operatorname{im}\Pi_{1,i}$, while every other table cell is unchanged.
It can therefore take sector $D$ only to $D\cup\{i\}$.  This either leaves
the sector unchanged, when $i\in D$, or adds the single index $i$.  The
branch depends on $h$ only through $h(i)$, and $i$ belongs to the output
sector in either case.  Thus, if this branch contributes to a block ending
in $D'$ and $h|_{D'}=h'|_{D'}$, then $h(i)=h'(i)$ and the two branch maps
are equal.

Now let $w\notin\mathcal A$ and set $y=\pi(w)$.  If
$y\notin\Omega_1$, the map is the table-independent baseline and preserves
every sector.  Suppose instead that $y=(1,v,i,t)$.  Its classification as a
candidate branch is determined by whether $v=h(i)$.  Two cases cover every
input sector.

If $i\notin D$, then $\Pi_D$ contains $\Pi_{0,i}$.  In the candidate case,
the rightmost $\Pi_{1,i}$ in the second summand of
\eqref{eq:Mtilde-candidate} kills that input component, while the first
summand acts as the baseline:
\[
   \widetilde M_{h,\pi,w}\Pi_D
      =M_w^{\mathrm{base}}\Pi_D.
\]
The same equality holds in the noncandidate case by definition.  Thus the
action on sector $D$ is independent of $h(i)$ even if changing $h(i)$
changes whether the branch is called a candidate.  It also preserves the
table sector, so only $D'=D$ can occur.

If $i\in D$, then $\Pi_D$ contains $\Pi_{1,i}$.  A candidate branch acts
through $\Pi_{1,i}M_w\Pi_{1,i}$ and hence stays in the recorded subspace of
cell $i$; a noncandidate branch uses the baseline and also preserves that
subspace.  All other cells are unchanged, so again $D'=D$.  The two actions
can differ, but this causes no violation of locality: if
$h|_D=h'|_D$, then $h(i)=h'(i)$, so the branch has the same classification
and the same action for both functions.

This branchwise analysis proves
\eqref{eq:app-permutation-transition} and
\eqref{eq:app-permutation-block-locality}.  It also isolates the reason for
the baseline term in~\eqref{eq:Mtilde-candidate}: on an input sector not
containing $i$, candidate and noncandidate branches must act identically.

We finish with the induction over query slots.  Let
$\ket{\widetilde\Psi^{(\ell)}_{h,\pi}}$ be the complete global vector after
$\ell$ applications of $\widetilde M$ and the following oracle-independent
isometry, and let
\[
   \ket{\widetilde\Psi^{(\ell)}_{h,\pi,D}}
      =
      (\Id_{\mathsf S_\ell}\otimes\Pi_D)
      \ket{\widetilde\Psi^{(\ell)}_{h,\pi}}.
\]
At $\ell=0$, the table is in $\ket U$, so only the empty sector is present;
that vector is independent of $h$.  Suppose after $\ell$ queries that only
sectors of size at most $\ell$ are present and that the vector in sector
$D$ depends on $h$ only through $h|_D$.  If
$\mathcal U_{\ell+1}$ is the next oracle-independent isometry, then
\[
   \ket{\widetilde\Psi^{(\ell+1)}_{h,\pi,D'}}
      =
      \sum_{\substack{D\subseteq X\\|D|\leq\ell}}
      \mathcal U_{\ell+1}
      \widetilde M_{h,\pi}^{D'\leftarrow D}
      \ket{\widetilde\Psi^{(\ell)}_{h,\pi,D}}.
\]
The slot index on $\widetilde M$ is suppressed.  By
\eqref{eq:app-permutation-transition}, every nonzero summand has
$D\subseteq D'$ and $|D'\setminus D|\leq1$, and hence
$|D'|\leq\ell+1$.

For locality, suppose $h|_{D'}=h'|_{D'}$.  Every contributing predecessor
satisfies $D\subseteq D'$, so the induction hypothesis makes the two
predecessor vectors equal.  Equation~\eqref{eq:app-permutation-block-locality}
makes the corresponding query blocks equal, and
$\mathcal U_{\ell+1}$ is independent of $h$.  The complete coherent sum in
sector $D'$ is therefore the same for $h$ and $h'$.  This establishes both
induction claims at $\ell+1$.  Taking $\ell=T$ proves
Lemma~\ref{lem:permutation-locality}.

\end{document}